\definecolor{myblue}{RGB}{94, 129, 181}
\definecolor{myorange}{RGB}{225, 156, 36}
\definecolor{mygreen}{RGB}{143, 176, 50}
\newcommand{\bR}{{\mathbb R}}
\newcommand{\PP}{{\mathbb P}}
\newcommand{\bE}{{\mathbb E}}
\newcommand{\FF}{{\mathcal F}}
\newcommand{\cA}{{\mathcal A}}
\newcommand{\Zitkovic}[1]{{\v Z}itkovi\'c}
\newcommand{\Sirbu}[1]{S\^\i rbu}
\newcommand\tv{{\tilde{v}}}
\newcommand\ta{{\tilde{a}}}
\numberwithin{equation}{section}
\theoremstyle{plain}                
\newtheorem{theorem}{Theorem}[section]
\newtheorem{lemma}[theorem]{Lemma}
\newtheorem{proposition}[theorem]{Proposition}
\theoremstyle{definition}           
\newtheorem{definition}[theorem]{Definition}
\theoremstyle{remark}
\newcommand{\defref}[1]{Definition~\ref{#1}}
\newcommand{\thmref}[1]{Theorem~\ref{#1}}
\newcommand{\proref}[1]{Proposition~\ref{#1}}
\newcommand{\lemref}[1]{Lemma~\ref{#1}}
\begin{document}

\author{Jin Hyuk Choi$^\dag$, Heeyoung Kwon$^\dag$, and Kasper Larsen$^\ddag$}

\thanks{The authors have benefited from helpful comments from two anonymous referees and Umut  \c{C}et\.in.}

\thanks{$^\dag$Ulsan National Institute of Science and Technology (jchoi@unist.ac.kr, heeee02@unist.ac.kr).
The research of these authors was supported by the National Research Foundation of Korea (NRF) grant funded by the Korea government (MSIT) (No. 2020R1C1C1A01014142 and No. 2021R1A4A1032924).}
\thanks{$^\ddag$ Rutgers University (KL756@math.rutgers.edu). The research of this author was supported by the National Science Foundation under Grant No. DMS 1812679 (2018 - 2022). Any opinions, findings, and conclusions or recommendations expressed in this material are those of the author(s) and do not necessarily reflect the views of the National Science Foundation (NSF)}

\title{Trading constraints in continuous-time Kyle models}

\begin{abstract} 
In a continuous-time Kyle setting, we prove global existence of an equilibrium when the insider  faces a terminal trading constraint. We prove that our equilibrium model produces output consistent with several empirical stylized facts such as autocorrelated aggregate holdings, decreasing price impacts over the trading day, and $U$ shaped optimal trading patterns.
\end{abstract}

\maketitle


\section{Introduction}

Kyle (1985) and Back (1992) are cornerstone works in market microstructure theory and we develop a continuous-time trading model variation where the insider faces a trading constraint. The constraint is modeled by a random variable $\ta$ which is private information for the insider. We use a ``hard" constraint formulation in the sense that the insider's stock holding process $(\theta_t)_{t\in[0,T]}$ must be such that its terminal value $\theta_T$ satisfies $\theta_T=\ta$ where $T\in(0,\infty)$ is the trading horizon. Our main theorem ensures the existence of a Kyle  equilibrium when the insider maximizes her expected profit subject to $\theta_T=\ta$. We prove that adding a trading restriction produces new model outputs relative to Kyle (1985) and Back (1992) such as: 

\begin{itemize}
\item[(i)] Positive autocorrelation in aggregate holdings $Y_t :=\theta_t+\sigma_wW_t$ where $\sigma_wW_t$ denotes the exogenous noise trader holdings ($W_t$ is a Brownian motion) and $\sigma_w>0$ is a constant.\footnote The process $Y_t$ is also  referred to as cumulative aggregate orders because,  informally, $dY_t$ are aggregate orders.
\item[(ii)] A decreasing price impact function $\lambda(t)$ for $t\in[0,T]$ (commonly called Kyle's $\lambda$). \item[(iii)]  $U$ shaped optimal stock orders for the insider (in expectation). 
\end{itemize}

These three features are empirically desirable (see, e.g., Barardehi and Bernhardt (2021)). While absent in Kyle (1985) and Back (1992), related Kyle models have produced similar features. For example, when the time horizon is exponentially distributed, Back and Baruch (2004), Caldentey and Stacchetti (2010), and \c{C}et\.in (2018) show that $\lambda$ is a supermartingale. Baruch (2002) and Cho (2003) show that an insider with an exponential utility function can also produce a decreasing $\lambda$. In contrast, Collin-Dufresne and Vyacheslav (2016) introduce stochastic noise-trader volatility and show this feature makes $\lambda$ a submartingale. \c{C}et\.in and Danilova (2016) show that market makers with exponential utilities can produce mean reversion in the equilibrium aggregate holdings as well as non-martingality of $\lambda$.

While trading restrictions have been used in many optimal investment problems, see, e.g., Almgren and Chriss (1999, 2000), Almgren (2003), and Schied and Sch\"oneborn (2009), restrictions produce model incompleteness and this complicates any equilibrium analysis significantly. Equilibrium existence proofs are given in: (i) Radner equilibrium models with limited participation in Basak and Cuoco (1998), Hugonnier (2012), and Prieto (2010). (ii) Nash equilibrium models with predatory trading in Brunnermeier and Pedersen (2005) and Carlin, Lobo, and Viswanathan (2007). (iii) Models based on ``soft" constraints in the form of quadratic penalties  incurred for deviating from the terminal constraint $\ta$ in G\^arleanu and Pedersen (2016), Bouchard, Fukasawa, Herdegen, and Muhle-Karbe (2018), and Choi, Larsen, and Seppi (2021).  Models based on ``soft'' constraints are mathematically simpler relative to the ``hard'' constraints used in (i), (ii), and in this paper. To the best of our knowledge, there is no equilibrium existence proofs in the settings of Kyle (1985) and Back (1992) when the insider is subject to either a ``hard'' or ``soft'' trading constraint. However, our paper is related to Degryse, de Jong, and van Kervel (2014),  Choi, Larsen, and Seppi (2019), van Kervel, Kwan, and Westerholm (2020)  who numerically consider Kyle models where there are both an unconstrained insider and an additional trader facing a terminal trading constraint. These papers do not give existence proofs.

Our main theorem establishes equilibrium existence by first proving existence of solutions to an autonomous system of first-order nonlinear ODEs and then provides the equilibrium stock-price and holding processes in terms of these solutions. Our  model has two mathematically critical components. First, the insider's terminal trading restriction $\theta_T = \ta$ requires a new state process in addition to the aggregate holding process $Y_t=\theta_t+\sigma_wW_t $ ($Y_t$   is the standard state process in many Kyle models). Consequently, the pricing rules given by  $H(t,Y_t)$ for a deterministic function $H$ used in both Kyle (1985) and Back (1992) are insufficient in our setting. We create generalized pricing rules based on a two dimensional state process and prove that such a generalization produces existence of a global Kyle equilibrium. Second, we show that our constrained insider optimally places a terminal block order $\Delta \theta_T\neq 0$. In a class of Kyle models, Back (1992) shows that block orders are always suboptimal. \c{C}et\.in and Danilova (2021) give general conditions under which block orders are suboptimal but they also illustrate that block orders can be optimal in more general Kyle models. Additionally, we show that the market makers can predict our insider's optimal terminal block order $\Delta \theta_T = \ta - \theta_{T-}$.

The paper is organized as follows: Section 2 recalls the continuous-time setting in  Back (1992) with Gaussian dividends. Section 3 ensures the existence of ODE solutions. Section 4 gives our main theorem. Section 5 contains results related to empirical stylized facts. Appendix \ref{app:KB} recalls the version of the Kalman-Bucy theorem from filtering theory we need and Appendix \ref{app:full} considers a model variation where the insider knows both $\ta$ and $\tv$ initially.

\bigskip

\section{Model}

Except for the insider's trading constraint, we use the continuous-time setup in  Back (1992) specialized to the Gaussian dividend case as in Kyle (1985). The time interval for trading is denoted by  $[0,T]$ for an  arbitrary time horizon $T\in (0,\infty)$. In the following, we use three sources of randomness: $(W_t)_{t\in[0,T]}$ is a one-dimensional Brownian motion with zero initial value, zero drift, and constant unit volatility, and the random variables $(\ta,\tv)$ are  jointly normally distributed with zeros means, standard deviations $\sigma_{\ta},\sigma_{\tv}>0$, and correlation $\rho \in (0,1]$. We assume that $(W_t)_{t\in[0,T]}$  is independent of $(\ta,\tv)$.

The financial market consists of a money market account with an exogenous zero interest rate and a stock with liquidating dividends given by $\tv$ and an endogenously determined stock-price process  $(P_t)_{t\in [0,T]}$. 

\medskip

\subsection{Traders}

There are three types of market participants in the model:  noise traders, an insider with a trading constraint, and market makers:

\bigskip

\noindent{\emph{Noise traders:} } For a constant $\sigma_w>0$, these traders' aggregate stock-holding process is exogenously given by $\sigma_w W_t$ at time $t\in[0,T]$. Equivalently, the noise-traders' exogenous initial holdings are zero and their holdings have dynamics $\sigma_w dW_t$ over time $t\in(0,T)$.

\bigskip
 
\noindent{\emph{Insider:} } The insider's stock-holding process is denoted by $(\theta_t)_{t\in[0,T]}$. The insider starts with zero initial stock holdings $\theta_{0-}:=0$ and is subject to the terminal trading constraint $\theta_T=\ta$. 
The insider knows $\ta$ at time $t=0$ and observes the noise-trader orders $\sigma_w dW_t$ over time; hence, the insider's filtration is:
\begin{align}\label{FI_t}
\FF_t^I:=\sigma\left(\ta, (W_u)_{u\in [0,t]}  \right)\quad \textrm{for} \quad t\in [0,T].
\end{align} 
Equivalently, we can follow Back (1992) and assume that the insider directly observes the aggregate orders $dY_t$ over time where 
\begin{align}\label{Y}
Y_t:=\theta_t + \sigma_wW_t,\quad t\in[0,T],
\end{align}
 and replace $\FF_t^I$ with $\sigma\left(\ta, (Y_u)_{u\in [0,t]}  \right)$. A second equivalent formulation assumes  that the insider observes stock prices $P_t$ over time and  replaces $\FF_t^I$ with $\sigma\left(\ta, (P_u)_{u\in [0,t]}  \right)$.  In our setting, these three specifications of the insider's filtration turn out to be equivalent and we use \eqref{FI_t} because it is exogenously given by model inputs.
 
 The random variable $\ta$ plays a double role: $\ta$ is the insider's terminal trading restriction and $\ta$ gives the insider initial private information about $\tv$ (because $\rho>0$). We are not the first authors to use holdings to generate private information. For example, the activism model in Back,  Collin-Dufresne, Fos, Li,  and Ljungqvist (2018) uses the insider's initial holdings as the key piece of private information.  However, in Appendix \ref{app:full}, we show that the equilibrium based on \eqref{FI_t} is also an equilibrium when the insider knows both $(\ta,\tv)$ initially in the sense $\FF_t^I:=\sigma(\tv, \ta, (W_u)_{u\in [0,t]})$.

The insider's objective is to maximize her expected profit subject to the constraint $\theta_T=\ta$:
\begin{align}\label{rebalancer_problem}
\sup_{ \theta \in \cA } \bE \left[
(\tv - P_T)\theta_T +\int_{[0,T]} \theta_{t-} dP_t   
\Big| \FF_0^I \right]= 
\rho\frac{\sigma_\tv}{\sigma_\ta } \ta^2- \inf_{ \theta \in \cA } \bE \left[
\int_{[0,T]} (\ta- \theta_{t-}) dP_t   
\Big| \FF_0^I \right],
\end{align} 
where the set $\cA$ is defined in Definition \ref{ad_def}  below. The first objective in \eqref{rebalancer_problem} is from Back (1992) and reflects that $\tv$ is the stock's liquidating dividends. The second objective in \eqref{rebalancer_problem} follows from  $\theta_T=\ta$ and $P_{0-}=0$ which give 
$$
P_T \theta_T = P_T\ta = (P_T-P_{0-})\ta = \ta\int_{[0,T]} dP_t.
$$
The joint normality of $(\ta,\tv)$ and $\FF_0^I = \sigma(\ta)$ give 
$$
\bE[\tv\theta_T|\FF_0^I] = \bE[\tv\ta| \sigma(\ta)] = 
\rho\frac{\sigma_\tv}{\sigma_\ta } \ta^2. 
$$

Based on the form of the insider's objective in \eqref{rebalancer_problem}, 
we conjecture that the insider's state variables are her remaining trading gab $\ta-\theta_{t}$ and the market makers' estimate of $\ta-\theta_{t}$ (denoted by $Q_t$). The  market makers prescribe a pricing rule $P_t$ (i.e., how prices depend on past and current aggregate holdings $(Y_s)_{s\in[0,t]}$) in terms of the following SDEs 
\begin{align}
dQ_t &= r(t)dY_t + s(t) Q_t dt \quad 
\textrm{for  $t\in[0,T)$}, \quad 
Q_{0-}=0, \label{q_dynamics}\\
dP_t &= \lambda(t)dY_t + \mu(t) Q_t dt \quad 
\textrm{for  $t\in[0,T)$}\label{p_dynamics}, \quad 
P_{0-}=0, \\
\Delta P_T &= \lambda(T) \big(\ta- \theta_{T-} - Q_{T-}\big), \label{p_terminal}
\end{align}
for deterministic functions of time $(r,s,\lambda,\mu)$ (these functions are to be determined in equilibrium). The insider uses $P_t$ when solving her optimization problem \eqref{rebalancer_problem}. As in Kyle (1985), equilibrium prices must be efficient in the sense that \eqref{p_rational} below holds. Additionally, in the equilibrium we construct in \thmref{main_thm} below, the process $Q_t$ from \eqref{q_dynamics} is the market makers' estimate of the insider's remaining trading gab $\ta-\theta_t$ (see \eqref{q_rational} below).

 As usual, to rule out doubling strategies and give a verification proof, we need to limit the insider's choice of possible  controls.

\begin{definition}[Admissibility]\label{ad_def} 
In addition to the initial condition $\theta_{0-}:=0$ and the terminal constraint $\theta_T=\ta$,
admissible holding processes $(\theta_t)_{t\in[0,T]}\in\cA$ satisfy: 

\begin{itemize}
\item[(i)] The holding process $\theta_t$ is a c\'adl\'ag semimartingale which is adapted to $\FF^I_t$ and square integrable $\bE\left [\int_0^T \theta_t^2 dt \right]<\infty$.

\item[ (ii)] For given continuous functions $r, s:[0,T)\to \bR$, a c\'adl\'ag solution $(Q_t)_{t\in[0,T)}$ of \eqref{q_dynamics} exists with an almost surely finite limit $Q_{T-}$ and  $\bE\left [\int_0^T Q_t^2 dt \right]<\infty$.

\item[(iii)] For given continuous functions $\lambda:[0,T]\to \bR$ and $\mu:[0,T)\to \bR$, a c\'adl\'ag solution $(P_t)_{t\in[0,T]}$ of \eqref{p_dynamics}-\eqref{p_terminal} exists such that  the stochastic integral 
$\int_{[0,t]} (\ta- \theta_{u-}) dP_u$,  $t\in[0,T]$, is a well-defined semimartingale with $\int_{[0,T]} (\ta- \theta_{t-}) dP_t$ integrable.
\begin{flushright}
$\square$
\end{flushright}

\end{itemize}

\end{definition}

The admissible set $\cA$ in Definition \ref{ad_def}  allows the insider to place block orders $\Delta \theta_t:= \theta_t - \theta_{t-}$ for all  $t\in[0,T]$. For example, the insider is allowed to place an initial buy order $\Delta \theta_0=1$, which produces the initial stock price 
$$
P_0 = P_{0-}+\Delta P_0 = \lambda(0)\Delta Y_0 = \lambda(0)\Delta \theta_0 = \lambda(0).
$$
However, we shall see in \thmref{main_thm} below that the insider optimally chooses to trade absolutely continuously at a $dt$ rate $\theta'_t$ for $t\in[0,T)$ and then places a block order at $t=T$ to satisfy her terminal restriction $\theta_T=\ta$ (i.e., $\Delta \theta_T = \ta- \theta_{T-}$). We note that while $Q_{T-}$ must exist for $\theta\in\mathcal{A}$, there is no $Q_T$ in  Definition \ref{ad_def}. 

\bigskip

\noindent{\emph{Market makers:} } These traders choose the pricing rule to clear the stock market. Market makers observe  the aggregate orders $dY_t$ from \eqref{Y}; hence, their filtration is defined as 
\begin{align}\label{sFM}
\FF_t^M:=\sigma\left((Y_u)_{u\in [0,t]}  \right),\quad t\in [0,T].
\end{align} 
The filtration $\FF_t^M$ is endogenous because $Y_t$ in \eqref{Y} depends on the insider's stock-holdings $\theta_t$. As in Kyle (1985) and Back (1992), in equilibrium, the stock price must be efficient in the sense
\begin{align}\label{p_rational}
P_t= \bE\left[\tv  |  \FF_t^M  \right], \quad t\in [0,T],
\end{align}
holds. The martingale condition in \eqref{p_rational} is a zero-profit requirement stemming from competition among risk-neutral market makers.

\medskip

\subsection{Equilibrium}
Similar to Kyle (1985), we consider only linear equilibria in this paper:

\begin{definition}[Kyle equilibrium]\label{equilibrium_def}
Continuous functions $\lambda:[0,T]\to \bR$ and $\mu,r,s,\beta,\alpha: [0,T)\to \bR$ constitute an equilibrium if:

\begin{itemize}
\item [(i)] For the pricing rule \eqref{p_dynamics}-\eqref{p_terminal} with $Q_t$ in \eqref{q_dynamics}, the stock-holding process
\begin{align}
d\theta_t &= \Big(\beta(t) (\ta -\theta_t-Q_t) + \alpha(t) Q_t\Big) dt  \quad 
\textrm{for $t\in [0,T)$}, \quad \theta_{0-}=0,
 \label{theta_dynamics}\\
\Delta \theta_T &= \ta- \theta_{T-},  \label{theta_terminal}
\end{align}
is in $\cA$ and maximizes the insider's objective \eqref{rebalancer_problem}.

\item[(ii)] The stock-holding process $\theta_t $ in \eqref{theta_dynamics}-\eqref{theta_terminal} with $Q_t$ in \eqref{q_dynamics} implies that $P_t$ in \eqref{p_dynamics}-\eqref{p_terminal} satisfies \eqref{p_rational}.
\end{itemize}
\begin{flushright}
$\square$
\end{flushright}
\end{definition}
\noindent While not a requirement in Definition \ref{equilibrium_def}, the process $Q_t$ given by \eqref{q_dynamics} satisfies $Q_t=\bE[\ta-\theta_t | \FF_t^M]$ in the equilibrium that we construct in Theorem \ref{main_thm} below.

\section{An autonomous system of ODEs}

Similar to Back (1992) and Back and Baruch (2004), we include a heuristic derivation of the value function and corresponding ODEs for the  constrained insider's optimization problem in \eqref{rebalancer_problem}. This section concludes with a result ensuring global existence of solutions to this system of coupled ODEs. In the next section, we use the solutions of the ODEs to prove existence of a Kyle equilibrium. 

To derive the HJB equation corresponding to the objective in \eqref{rebalancer_problem}, it suffices to consider stock-holding processes $\theta_t$ with $\theta_{0-}:= 0$ and dynamics 
\begin{align}
d\theta_t :=
\begin{cases}
&\theta'_tdt,\quad  t\in[0,T), \\
&\ta-\theta_{T-},\quad t=T,
\end{cases}
\end{align}
where $\theta'_t$ is an arbitrary order-rate process. Inserting the dynamics of \eqref{Y} into \eqref{q_dynamics}, we see that the state process
\begin{align}
X_t:= \ta- \theta_t - Q_t,\quad t\in[0,T), \quad X_{0-}:=\ta,\label{X_def}
\end{align}
has dynamics
\begin{equation}
\begin{split}\label{X_dynamics}
dX_t &=  -\theta'_tdt - r(t)\big( \theta'_t dt + \sigma_w dW_t\big) - s(t)Q_tdt,\quad t\in[0,T).
\end{split}
\end{equation}
We conjecture that the value function corresponding to the infimum in \eqref{rebalancer_problem} has the following quadratic structure:
\begin{align}
V(t,x,q):= Ix^2 + J(t)x q + K(t), \quad t\in[0,T],\quad x,q\in \mathbb{R},\label{V_form}
\end{align}
where $I>0$ is a constant and $(J,K)$ are deterministic functions of time. Because $I$ is a constant, the terminal condition for $V$ in \eqref{V_form} at time $t=T$ is not zero. This ``facelift'' feature was already present in the continuous-time model in Kyle (1985). However, unlike Kyle (1985), our equilibrium price-process $P_t$ does not converge to $\bE[\tv|\mathcal{F}_0^I] = \rho\frac{\sigma_\tv}{\sigma_\ta}\ta $ as $t\uparrow T$ (see \proref{prop_properties}(5)).

Formally, we get the HJB equation by equating the drift in the dynamics
\begin{align}\label{HJB1}
\begin{split}
(\ta-\theta_t)dP_t +dV(t,X_t,Q_t)
\end{split}
\end{align}
to zero, where $dP_t$ is from \eqref{p_dynamics}, $dQ_t$ is from \eqref{q_dynamics}, and $dX_t$ is from \eqref{X_dynamics}. It\^o's lemma shows that the drift of \eqref{HJB1} is an affine function of $\theta'_t$ given by a slope and an intercept. Consequently, for the drift of \eqref{HJB1} to be zero, the slope and intercept   must each separately be zero:
\begin{align*}
\begin{split}
0&=X_t\Big(\lambda (t)-2 I \big(r(t)+1\big)+J(t) r(t)\Big)+Q_t\Big(\lambda (t)-\big(r(t)+1\big)J(t) \Big),\\
0&=Q_t^2\big( \mu(t)-s(t)J(t)\big)+X_tQ_t\Big(s(t) \big(J(t)-2 I\big)+J'(t)+\mu (t)\Big)+K'(t)+\sigma_w^2\big(I-J(t)\big) r(t)^2.
\end{split}
\end{align*}
Equating coefficients for $X_t$ and $Q_t$ in the first equality and coefficients for $Q_t^2$, $X_tQ_t$, and deterministic terms in the second equality to zero gives the requirements
\begin{equation}\label{HJB3}
\begin{split}
0&=\lambda(t)-2I(r(t)+1)+J(t) r(t),\\
0&=\lambda(t)-(r(t)+1)J(t),\\
0&=\mu(t)-s(t)J(t),\\
0&=s(t) \big(J(t)-2 I\big)+J'(t)+\mu (t),\\
0&=K'(t)+\sigma_w^2 \big(I-J(t)\big)r(t)^2.
\end{split}
\end{equation}

Our equilibrium existence proof in the next section is based on an autonomous two-dimensional coupled system of nonlinear ODEs. To heuristically derive this ODE system, the Kalman-Bucy result from Gaussian filtering theory (recalled in Appendix \ref{app:KB}) shows that when $\theta_t$ is as in \eqref{theta_dynamics}, the property
\begin{align}
Q_t=\bE[\ta-\theta_t | \FF_t^M],\quad t\in[0,T),  \label{q_rational}
\end{align}
and \eqref{p_rational} impose the relations 
\begin{align}
\lambda(t)&=\frac{\beta(t) \Sigma_2(t)}{\sigma_w^2}, \quad \mu(t) = -\alpha(t) \lambda(t),\label{lm_exp}\\
r(t) &= \frac{\beta(t) \Sigma_1(t)}{\sigma_w^2}, \quad s(t) =- \alpha(t)(1+r(t)), \label{rs_exp}
\end{align}
where 
$\Sigma_1(t)$ and $\Sigma_2(t)$ are defined as solutions to the ODEs
\begin{align}
\Sigma_1'(t) &= -\sigma_w^2(r(t)^2 + 2r(t)),\quad t\in [0,T),\quad \Sigma_1(0)= \sigma_\ta^2, \label{S1_ode}\\
\Sigma_2'(t) &= -\sigma_w^2 \big(1+r(t)\big) \lambda(t), \quad t\in [0,T), \quad \Sigma_2(0)=\rho \sigma_\ta \sigma_\tv.\label{S3_ode}
\end{align}
 In \eqref{Sigmas} below, we shall see that $\Sigma_1(t)$ and $\Sigma_2(t)$ are the remaining variance and covariance functions produced by Kalman-Bucy filtering theory (Appendix \ref{app:KB} recalls the result from Kalman-Bucy filtering theory that we use). The relation between $\mu(t)$ and $s(t)$ in \eqref{lm_exp}-\eqref{rs_exp} produces redundancy in the second and third equation in \eqref{HJB3}.

By solving the first two equations in \eqref{HJB3} we get $I=\frac{(1+2r(t))\lambda(t)}{2(1+r(t))^2}$ and so
\begin{align}\label{HJB4}
\begin{split}
I\big(1+r(t)\big)^2&=\tfrac12\lambda(t)\big(1+2r(t)\big)\\
&=\tfrac12\tfrac{r(t)\Sigma_2(t)}{\Sigma_1(t)}\big(1+2r(t)\big),
\end{split}
\end{align}
where the second equality is due to \eqref{lm_exp}-\eqref{rs_exp}. By using \eqref{S1_ode}-\eqref{S3_ode} when computing time derivatives in \eqref{HJB4}, we produce the ODE
\begin{align}\label{rkey1}
 r'(t) =- \frac{\sigma_w^2r(t)^2 \big(1+r(t)\big)(1+2r(t))}{\big(1+3r(t)\big)\Sigma_1(t)},\quad t\in[0,T).
\end{align}
The ODEs \eqref{S1_ode} and \eqref{rkey1} form an autonomous system. The correct initial condition $r(0)=r_0$ for \eqref{rkey1} is such that  $\Sigma_1'(t)$ in \eqref{S1_ode} satisfies $\Sigma_1(T-)=0$. The intuition behind $\Sigma_1(T-)=0$ comes from the representation in \eqref{q_rational}, which gives the market makers' conditional variance of $\ta - \theta_t$ as
\begin{align}\label{HJB44}
\begin{split}
\bE\big[\big( \ta - \theta_t -\bE[\ta-\theta_t | \FF_t^M]\big)^2| \FF_t^M \big] & = \bE\big[\big( \ta - \theta_t -Q_t\big)^2| \FF_t^M \big]\\
&=\Sigma_1(t).
\end{split}
\end{align}
The last equality in \eqref{HJB44} is based on the Gaussian structure, which gives us that $\ta - \theta_t -Q_t$ is independent of $\FF_t^M$ when \eqref{q_rational} holds and $\Sigma_1(t)$ being the remaining variance function. The terminal condition $\Sigma_1(T-)=0$ can be interpreted as the market makers expect $\theta_t$ to converge to $\ta$ as $t\uparrow T$. As we shall see in \proref{prop_properties}(4), even though $Q_{T-} \neq 0$ in equilibrium, $\Sigma_1(T-)=0$ is the correct boundary condition because the insider's optimal terminal block order $\Delta\theta_T =\ta-\theta_{T-}\neq 0$ can be predicted by the market makers.

While the above discussion is purely heuristic, the next  result is rigorous and guarantees global existence to above coupled ODE system.

\begin{lemma}\label{ode_existence_lemma}
There exists a constant $r_0\in (0,\infty)$ such that the coupled ODEs \eqref{S1_ode} and \eqref{rkey1}
with initial conditions 
\begin{align}
r(0)=r_0 \quad \textrm{and} \quad \Sigma_1(0)=\sigma_\ta^2, \label{r0_boundary_cond}
\end{align}
 have global solutions in $\mathcal{C}^1([0,T])$ that satisfy
 \begin{align}\label{BC22}
\Sigma_1(T):= \Sigma_1(T-)=0,\quad r(T):=r(T-)=0, \quad r(t),\Sigma_1(t)>0\quad  \text{ for } t\in [0,T).
 \end{align}
\end{lemma}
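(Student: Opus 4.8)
The plan is to decouple the system through a phase‑plane invariant, which reduces it to a single scalar autonomous ODE for $r$, and then to run a shooting argument in the free parameter $r_0=r(0)$, using the required boundary behaviour $\Sigma_1(T-)=0$ to pin $r_0$ down.

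First I would observe that, along any solution with $r>0$ and $\Sigma_1>0$, dividing \eqref{rkey1} by \eqref{S1_ode} gives the separable relation
\[
\frac{dr}{d\Sigma_1}=\frac{r(1+r)(1+2r)}{(1+3r)(r+2)\,\Sigma_1},
\qquad\text{where}\qquad
\frac{(1+3r)(r+2)}{r(1+r)(1+2r)}=\frac{2}{r}-\frac{2}{1+r}+\frac{3}{1+2r}.
\]
Integrating shows that $G(r(t))/\Sigma_1(t)$ is constant, where $G(r):=r^2(1+2r)^{3/2}/(1+r)^2$. Motivated by this, for fixed $r_0>0$ I would set $c:=G(r_0)/\sigma_\ta^2>0$ and let $r$ solve the scalar equation
\[
r'(t)=-\sigma_w^2\,c\,\phi(r(t)),\qquad r(0)=r_0,\qquad \phi(r):=\frac{(1+r)^3}{(1+3r)\sqrt{1+2r}},
\]
whose right‑hand side is smooth and, on the compact interval $[0,r_0]$, bounded away from $0$; hence $r$ is strictly decreasing and reaches $0$ at the finite time
\[
T^\ast(r_0):=\frac1{\sigma_w^2 c}\int_0^{r_0}\frac{dr}{\phi(r)}
=\frac{\sigma_\ta^2(1+r_0)^2}{\sigma_w^2\,r_0^2(1+2r_0)^{3/2}}\int_0^{r_0}\frac{(1+3r)\sqrt{1+2r}}{(1+r)^3}\,dr .
\]
One then checks, using the identity $G'(r)\phi(r)=r(r+2)$, that $\Sigma_1:=G(r)/c$ together with this $r$ solves \eqref{S1_ode}--\eqref{rkey1} with the initial data \eqref{r0_boundary_cond}. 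On $[0,T^\ast(r_0))$ we have $r>0$ and hence $\Sigma_1=G(r)/c>0$, and since $G$ vanishes on $[0,\infty)$ only at $0$, the functions $r$ and $\Sigma_1$ hit $0$ simultaneously, at $T^\ast(r_0)$, which is exactly \eqref{BC22} with $T$ replaced by $T^\ast(r_0)$. The one‑sided derivatives at $T^\ast(r_0)$ exist --- $\Sigma_1'=-\sigma_w^2 r(r+2)\to0$ and $r'\to-\sigma_w^2 c$, the latter also resolving the $0/0$ form of \eqref{rkey1} there because $\Sigma_1\sim r^2/c$ as $t\uparrow T^\ast(r_0)$ --- so the pair lies in $\mathcal{C}^1\big([0,T^\ast(r_0)]\big)$.

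Finally I would choose $r_0$ so that $T^\ast(r_0)=T$. The map $r_0\mapsto T^\ast(r_0)$ is continuous on $(0,\infty)$: as $r_0\downarrow0$ the integral is $\sim r_0$ while the prefactor is $\sim\sigma_\ta^2/(\sigma_w^2 r_0^2)$, so $T^\ast(r_0)\to\infty$; as $r_0\uparrow\infty$ the integrand is $O(r^{-3/2})$ at infinity, so the integral converges to a finite limit while the prefactor is $O(r_0^{-3/2})$, so $T^\ast(r_0)\to0$. By the intermediate value theorem there is $r_0\in(0,\infty)$ with $T^\ast(r_0)=T$; restricting the corresponding solution to $[0,T]$ gives the asserted global $\mathcal{C}^1$ solution with the sign and boundary properties in \eqref{BC22}.

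The step I expect to be the main obstacle is spotting the invariant $G(r)/\Sigma_1$; once it is in hand the rest is elementary, the only genuinely delicate points being the two boundary asymptotics of $T^\ast(r_0)$ that feed the intermediate value theorem --- in particular the integrability of $(1+3r)\sqrt{1+2r}/(1+r)^3$ near $+\infty$, which is what forces $T^\ast(r_0)\to0$ --- together with the $\mathcal{C}^1$ extension of the solution all the way to the extinction time $T$ despite the indeterminate form of \eqref{rkey1} there.
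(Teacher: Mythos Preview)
Your proof is correct and follows essentially the same route as the paper: you discover the same first integral $G(r)=r^2(1+2r)^{3/2}/(1+r)^2$ with $G(r)/\Sigma_1$ constant, reduce to the same scalar autonomous ODE for $r$, and run the same shooting argument in $r_0$ via the intermediate value theorem on the extinction time, using the identical small- and large-$r_0$ asymptotics. The only cosmetic difference is that the paper evaluates your integral $\int_0^{r_0}(1+3r)\sqrt{1+2r}/(1+r)^3\,dr$ in closed form as $F(r_0)-F(0)$ with $F(x)=4\tan^{-1}\!\sqrt{1+2x}-\sqrt{1+2x}\,(3+4x)/(1+x)^2$, thereby writing $r(t)$ and $\Sigma_1(t)$ explicitly; your integral formulation is equivalent and arguably makes the $\mathcal{C}^1$ extension at $t=T$ more transparent.
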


\begin{proof}
Let $F:[0,\infty)  \to [\pi-3,2\pi)$ be defined as
\begin{align}
F(x):=4 \tan^{-1}(\sqrt{1+2x})- \frac{\sqrt{1+2x} \, (3+4x)}{(1+x)^2}. \label{F_def}
\end{align}
Then, $F:[0,\infty)  \to [\pi-3,2\pi)$  is strictly increasing and bijective because
$F'(x)=\frac{(1+3x)\sqrt{1+2x}}{(1+x)^3}>0$ for $x\geq 0$. Hence, its inverse function $F^{-1}:  [\pi-3,2\pi) \to [0,\infty)$ is well-defined. 

Let $G,\tau :(0,\infty)\to [0,\infty)$ be defined as
\begin{align*}
G(x)&:= \frac{x^2(1+2x)^{\frac{3}{2}}}{(1+x)^2},\\
\tau(x)&:=\frac{\sigma_\ta^2(F(x)-F(0))}{\sigma_w^2 G(x)}. 
\end{align*}
For $r_0 \in  (0,\infty)$, the functions
\begin{equation}
\begin{split}\label{ode_explicit_solution}
\begin{cases}
r(t):=F^{-1}\left(  F(r_0)- \tfrac{\sigma_w^2}{\sigma_\ta^2}G(r_0) t \right), \\
\Sigma_1(t):=\frac{\sigma_\ta^2 G(r(t))}{G(r_0)},
\end{cases}
\end{split}
\end{equation} 
are well-defined for $t\in [0,\tau(r_0)]$ with $r(0) =r_0$. Direct computations show that \eqref{ode_explicit_solution} satisfies \eqref{rkey1} and \eqref{r0_boundary_cond}  for $t\in [0,\tau(r_0))$. Furthermore, the above functions $r(t)$ and $\Sigma_1(t)$ satisfy
\begin{align*}
&r(\tau(r_0))=0,\quad \Sigma_1(\tau(r_0))=0,\\
&r(t),\Sigma_1(t)>0 \quad \textrm{for} \quad t\in [0,\tau(r_0)).
\end{align*}

To complete the proof, it remains to be proven that there exists $r_0\in (0,\infty)$ such that $\tau(r_0)=T$.
L'Hopital's rule gives
\begin{equation}
\begin{split}\label{tau_limits}
&\lim_{x \downarrow 0} \tau(x)= \infty, \quad \lim_{x\to \infty} \tau(x) = 0.
\end{split}
\end{equation} 
These limits and the continuity of $\tau(x)$ for $x\in(0,\infty)$ imply that there exists $r_0\in(0,\infty)$ such that $\tau(r_0)=T$.
\end{proof}

\section{Existence of a Kyle equilibrium}

The next result is our main contribution and the theorem ensures the existence of a global Kyle equilibrium in the sense of Definition \ref{equilibrium_def}.

\begin{theorem}\label{main_thm} Let $\rho \in (0,1]$ and let $r(t)$ and $\Sigma_1(t)$ be as in  \lemref{ode_existence_lemma}. Define the constant $I:=\frac{ \rho\sigma_\tv}{\sigma_\ta} \frac{ r_0(1+2r_0)}{2(1+r_0)^2}>0$ and  the functions
\begin{equation}\label{equilibrium_explicit}
\begin{split}
\lambda(t)&:=2I\frac{\big(1+r(t)\big)^2}{1+2r(t)} ,\quad t\in[0,T],\\
\mu(t) &:=- \frac{\rho \sigma_w^2 \sigma_\tv r_0^3(1+2r_0)^{\frac{5}{2}}}{\sigma_\ta^3 (1+r_0)^4}
\frac{(1+r(t))^4}{(1+3r(t))(1+2r(t))^{\frac{5}{2}}}, \quad t\in[0,T],\\
\beta(t)&:=\frac{\sigma_w^2 r_0^2(1+2r_0)^{\frac{3}{2}}}{\sigma_\ta^2 (1+r_0)^2}
 \frac{(1+r(t))^2}{r(t)(1+2r(t))^{\frac{3}{2}}}, \quad t\in[0,T),\\
s(t)&:=-\frac{\sigma_w^2 r_0^2(1+2r_0)^{\frac{3}{2}}}{\sigma_\ta^2 (1+r_0)^2}
 \frac{(1+r(t))^3}{(1+3r(t))(1+2r(t))^{\frac{3}{2}}}, \quad t\in[0,T],\\
\alpha(t)&:=\frac{\sigma_w^2 r_0^2(1+2r_0)^{\frac{3}{2}}}{\sigma_\ta^2 (1+r_0)^2}
 \frac{(1+r(t))^2}{(1+3r(t))(1+2r(t))^{\frac{3}{2}}}, \quad t\in[0,T].
\end{split}
\end{equation}
Then, the functions $\lambda,\mu,r,s,\beta,\alpha$ constitute a Kyle equilibrium, where the process $Q_t$ additionally satisfies \eqref{q_rational}.

\end{theorem}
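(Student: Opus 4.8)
The plan is to verify, for the explicit sextuple $(\lambda,\mu,r,s,\beta,\alpha)$ of \eqref{equilibrium_explicit}, both conditions of Definition~\ref{equilibrium_def} together with the filtering identity \eqref{q_rational}, splitting the work into a ``market side'' (which yields Definition~\ref{equilibrium_def}(ii) and \eqref{q_rational}), an ``insider side'' (which yields Definition~\ref{equilibrium_def}(i)), and an algebraic bookkeeping step. In the bookkeeping step I introduce the auxiliary functions $J(t):=\lambda(t)/(1+r(t))=\tfrac{2I(1+r(t))}{1+2r(t)}$, $K(t):=\int_t^T\sigma_w^2\big(I-J(u)\big)r(u)^2\,du$ and $\Sigma_2(t):=\sigma_w^2\lambda(t)/\beta(t)$, and check by direct substitution — using only \eqref{S1_ode}, \eqref{rkey1} and the closed forms in \lemref{ode_existence_lemma} — that $(\lambda,\mu,r,s,\beta,\alpha)$ together with $(\Sigma_1,\Sigma_2,I,J,K)$ satisfies all of \eqref{lm_exp}, \eqref{rs_exp}, \eqref{S1_ode}, \eqref{S3_ode} and the HJB system \eqref{HJB3}. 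I also record the boundary and boundedness facts used later: $\lambda,\mu,s,\alpha$, $\beta r$ and $\Sigma_2$ extend continuously to $[0,T]$ with $\lambda>0$; $r(T)=\Sigma_1(T)=K(T)=0$; $\lambda(T)=J(T)=2I$; and $\beta(t)\to\infty$, $J(t)\to2I$ as $t\uparrow T$. This step is pure (if lengthy) algebra.

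\emph{Market side.} Substitute the candidate holdings \eqref{theta_dynamics}--\eqref{theta_terminal} into $Y_t=\theta_t+\sigma_wW_t$. Writing $Z_t:=\ta-\theta_t$ and $X_t:=Z_t-Q_t$, one gets $dY_t=\big(\beta(t)X_t+\alpha(t)Q_t\big)dt+\sigma_wdW_t$, so $(Z_t,\tv)$ is a conditionally Gaussian system observed through $Y$. Applying the Kalman--Bucy result of Appendix~\ref{app:KB}, the filters $\bE[Z_t\mid\FF^M_t]$ and $\bE[\tv\mid\FF^M_t]$ solve linear SDEs driven by the innovation process $\nu$ (an $\FF^M$-Brownian motion), whose gains and Riccati equations — by the bookkeeping identities — coincide exactly with \eqref{q_dynamics} and \eqref{p_dynamics}--\eqref{p_terminal}. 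Hence $Q_t=\bE[Z_t\mid\FF^M_t]$ (this is \eqref{q_rational}), the conditional variance of $X_t$ is $\Sigma_1(t)$, and $P_t=\bE[\tv\mid\FF^M_t]$ for $t\in[0,T)$; one checks that $\FF^M_t$ is generated by $\nu$ because $(Y,Q)$ solves a linear SDE driven by $\nu$. At $t=T$, $\bE[X_{T-}]=0$ and $\mathrm{Var}(X_{T-})=\Sigma_1(T-)=0$ force $X_{T-}=0$ a.s., so the block order $\Delta\theta_T=\ta-\theta_{T-}=Q_{T-}$ is $\FF^M_{T-}$-measurable; therefore $\FF^M_T=\FF^M_{T-}$, $\Delta P_T=\lambda(T)X_{T-}=0$, and $P_T=P_{T-}=\bE[\tv\mid\FF^M_T]$, completing Definition~\ref{equilibrium_def}(ii). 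Along the way one verifies the candidate lies in $\cA$, the key estimate being that $\beta(t)X_t$ has bounded variance near $T$ since $\beta(t)^2\Sigma_1(t)=\sigma_w^2\beta(t)r(t)$ is bounded.

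\emph{Insider side.} Freeze the pricing rule \eqref{q_dynamics}--\eqref{p_terminal} with the candidate functions and take any $\theta\in\cA$. Put $V(t,x,q):=Ix^2+J(t)xq+K(t)$ as in \eqref{V_form} and apply It\^o's formula, jumps included, to $\Phi_t:=\int_{[0,t]}(\ta-\theta_{u-})\,dP_u+V(t,X_t,Q_t)$ with $X$ from \eqref{X_dynamics}. The relations \eqref{HJB3} make the continuous drift of $\Phi$ vanish identically, while each interior jump of $\Phi$ equals $\tfrac{\lambda(s)}{2}(\Delta\theta_s)^2\ge0$ (the slope relations in \eqref{HJB3} kill the linear terms and \eqref{HJB4} identifies the remaining coefficient as $\lambda(s)/2$); hence $\Phi$ is a local submartingale on $[0,T)$, promoted to a true submartingale by the integrability in Definition~\ref{ad_def}. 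Taking $t\uparrow T$ and then adding the terminal contribution $(\ta-\theta_{T-})\Delta P_T=\lambda(T)\big(X_{T-}^2+X_{T-}Q_{T-}\big)$, and using $\lambda(T)=J(T)=2I$ and $K(T-)=0$, gives for every $\theta\in\cA$
\[
\bE\Big[\int_{[0,T]}(\ta-\theta_{t-})\,dP_t\,\Big|\,\FF^I_0\Big]\ \ge\ I\ta^2+K(0)+I\,\bE\big[X_{T-}^2\,\big|\,\FF^I_0\big]\ \ge\ I\ta^2+K(0),
\]
and both inequalities are equalities for the candidate (which has no interior jumps, so $\Phi$ is a martingale on $[0,T)$, and $X_{T-}=0$ by the market side). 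Since $I>0$, the candidate minimizes the right-hand side of \eqref{rebalancer_problem}, i.e.\ Definition~\ref{equilibrium_def}(i) holds; together with the market side, this proves the theorem.

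\emph{Main obstacle.} The bookkeeping algebra is routine; the genuine difficulties are all at the terminal time. First, one must justify that the local (sub)martingales above are true (sub)martingales up to $T-$ and that $X_{T-},Q_{T-},P_{T-}$ exist in $L^2$, even though $\beta$, $J$ and the Kalman gain blow up as $r(t)\downarrow0$ — this is exactly where the cancellations ``$\beta r$ bounded'' and ``$\beta^2\Sigma_1$ bounded'' are needed. Second, one must show $\FF^M$ equals the innovation filtration and that $P$ is continuous at $T$, so that the terminal block order is predictable by the market makers. Third, the precise matching of the ``facelift'' constants $\lambda(T)=J(T)=2I$ and $K(T-)=0$ is what converts the terminal block-order cost into the non-negative quadratic $IX_{T-}^2$, thereby singling out the candidate strategy; getting these boundary values to line up is the crux of the verification. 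One also has to check that the conditionally linear system $(Z_t,\tv)$ meets the hypotheses of the Kalman--Bucy statement recalled in Appendix~\ref{app:KB}.
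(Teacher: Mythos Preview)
Your proposal is correct and follows essentially the same verification argument as the paper's proof (the paper's three steps are your ``insider side'' lower bound, the admissibility/attainment check, and the terminal price efficiency, just in a different order), including the same facelift identities $\lambda(T)=J(T)=2I$, $K(T)=0$ and the crucial $\Sigma_1(T-)=0\Rightarrow X_{T-}=0$ a.s. One small imprecision: for a general $\theta\in\cA$ with a nonzero martingale part, the continuous finite-variation part of $\Phi$ does \emph{not} vanish identically but contains the extra term $\tfrac12\int_0^t\lambda(u)\,d[\theta,\theta]^c_u\ge0$ (this is exactly the $\tfrac12\int_0^t\lambda(u)\,d[\theta,\theta]^c_u$ appearing in the paper's computation), so your submartingale conclusion is still correct but for a slightly different reason than you stated.
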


\begin{proof}

\noindent  We start by defining the function  
\begin{align}
\begin{split}
\Sigma_2(t):=&\frac{\lambda(t) \Sigma_1(t)}{r(t)} \\
=&\frac{\rho \sigma_\ta \sigma_\tv}{r_0 \sqrt{1+2r_0}} r(t)\sqrt{1+2r(t)},\label{Sigma2}
\end{split}
\end{align}
for $t\in[0,T]$. Since $r(t)$ is continuous on $t\in [0,T]$ and $r(T)=0$, $\Sigma_2(t)$ is continuous on $t\in[0,T]$ with 
$\Sigma_2(T)=0$.

We divide the proof into three steps:

\noindent{\bf Step 1/3:} In this step, we show that the value function corresponding to \eqref{rebalancer_problem} is greater than or equal to $V(t,x,q)$ in \eqref{V_form} for the coefficient functions $J,K:[0,T]\to \bR$ defined by
\begin{equation}
\begin{split}\label{IJK_explicit}
J(t)&:=\frac{\lambda(t)}{1+r(t)},\\
K(t)&:=\sigma_w^2 \int_t^T \big(I-J(u)\big)r(u)^2 du.
\end{split}
\end{equation}
We let $X_t$ be as in \eqref{X_def}. Then, for $\theta\in \cA$ arbitrary, we have
\begin{equation}
\begin{split}\label{quad_variation}
d[X,X]^c_t &= \big(1+r(t)\big)^2 d[\theta,\theta]^c_t + \sigma_w^2 r(t)^2 dt + 2\sigma_w\big(1+r(t)\big)r(t) d[\theta,W]^c_t,\\
 d[X,Q]^c_t &= -r(t)\big(1+r(t)\big) d[\theta,\theta]^c_t - \sigma_w^2 r(t)^2 dt - \sigma_wr(t)\big(2r(t)+1\big) d[\theta,W]^c_t,
\end{split}
\end{equation}
where $[\cdot,\cdot]^c$ denotes the continuous part of the quadratic covariation process $[\cdot,\cdot]$; see, e.g., p.70 in Protter (2005). For $t\in [0,T)$, Ito's formula gives
\begin{align}
&\int_{[0,t]} (\ta-\theta_{u-})dP_u + V(t,X_t,Q_t)   \nonumber\\
&= \int_{[0,t]} (\ta-\theta_{u-})\lambda(u)\big(d\theta_u + \sigma_wdW_u-  \alpha(u) Q_u du\big) \nonumber \\
&\quad +I\ta^2+  K(0) +\int_{[0,t]} \big(J'(u) X_{u} Q_u + K'(u)\big) du \nonumber \\
&\quad + \int_{[0,t]} \Big( \big( 2I X_{u-}+J(u)Q_{u-} \big)dX_u + J(u)X_{u-} dQ_u +I d[X,X]^c_u + J(u) d[X,Q]^c_u \Big) \nonumber\\
&\quad + \sum_{0\leq u\leq t} \Big( \Delta V(u,X_u,Q_u) - \big( 2I X_{u-}+J(u)Q_{u-})\big)\Delta X_u - J(u)X_{u-} \Delta Q_u \Big)  \nonumber\\
& = I\ta^2+K(0)+ \int_0^t \Big(  \big(\lambda(u)-2r(u)I+r(u)J(u) \big)X_u + \big(\lambda(u)-r(u)J(u)\big)Q_u  \Big)\sigma_w dW_u  \nonumber \\
&\quad  +\tfrac12\int_0^t\lambda(u) d[\theta,\theta]^c_u+\tfrac12\sum_{0\leq u\leq t} \lambda(u)(\Delta \theta_u)^2, \label{simplified}
\end{align}
where we used $\ta-\theta_{u-}=X_{u-}+Q_{u-}$, \eqref{equilibrium_explicit}, \eqref{IJK_explicit}, and \eqref{quad_variation} to obtain the second equality. In \eqref{simplified}, the stochastic integral with respect to $dW_u$ is a martingale on $t\in [0,T]$ because of the square integrability requirement in Definition \ref{ad_def}. Passing $t\uparrow T$ in \eqref{simplified} produces the limit
\begin{align*}
&\int_{[0,T)} (\ta-\theta_{u-})dP_u + V(T,X_{T-},Q_{T-}) \\
&=\lim_{t \uparrow T} \left(  \int_{[0,t]} (\ta-\theta_{u-})dP_u + V(t,X_t,Q_t)  \right)\\
&=I\ta^2+K(0)+ \int_0^T \Big(  \big(\lambda(u)-2r(u)I+r(u)J(u) \big)X_u + \big(\lambda(u)-r(u)J(u)\big)Q_u  \Big)\sigma_w dW_u  \nonumber \\
&\quad  +\tfrac12\int_0^T\lambda(u) d[\theta,\theta]^c_u+\tfrac12\sum_{0\leq u< T} \lambda(u)(\Delta \theta_u)^2.
\end{align*}
Because the last two terms are positive and the stochastic integral with respect to $dW_u$ is a martingale for any $\theta\in \cA$ we have
\begin{align}
\bE\left[\int_{[0,T)} (\ta-\theta_{u-})dP_u + V(T,X_{T-},Q_{T-}) \Big| \FF_0^I \right] \geq I\ta^2+ K(0). \label{value_ineq}
\end{align}
By using $r(T-)=0$,  $J(t)$ in \eqref{IJK_explicit}, and $\lambda(t)$ in  \eqref{equilibrium_explicit} we obtain\footnote{The functions $\lambda(t)$ and $J(t)$ are continuous functions on $t\in[0,T]$, so we have $\lambda(T-)=\lambda(T)$ and $J(T-)=J(T)$. Except for $\beta$, the other functions $(r, \Sigma_1,\Sigma_2, \mu, s, \alpha)$ are also continuous on $t\in[0,T]$.}
\begin{align}
J(T-)=\lambda(T)=2I>0. \label{lIJ}  
\end{align}
Combing \eqref{lIJ} with \eqref{p_terminal} produces the inequality
\begin{align}
(\ta-\theta_{T-})\Delta P_T&=(\ta-\theta_{T-})\lambda(T) (\ta-\theta_{T-}-Q_{T-})\nonumber\\
&=2IX_{T-}^2+ J(T-)X_{T-}Q_{T-} \nonumber\\
&\geq V(T,X_{T-},Q_{T-}). \label{jump_ineq}
\end{align}
We combine \eqref{value_ineq} and \eqref{jump_ineq} to conclude that 
\begin{align*}
\inf_{\theta\in \cA}\bE\left[ \int_{[0,T]} (\ta-\theta_{u-})dP_u  \Big| \FF_0^I \right]&= \inf_{\theta\in \cA}\bE\left[ \int_{[0,T)} (\ta-\theta_{u-})dP_u + (\ta-\theta_{T-})\Delta P_T \Big| \FF_0^I \right]\\
&\geq \inf_{\theta\in \cA} \bE\left[ \int_{[0,T)} (\ta-\theta_{u-})dP_u +V(T,X_{T-},Q_{T-})\Big| \FF_0^I  \right] \\
&\geq I\ta^2+K(0).
\end{align*}

\noindent{\bf Step 2/3:} We show that $V(t,x,q)$ in \eqref{V_form} is indeed the 
 the value function corresponding to \eqref{rebalancer_problem}, and
 the stock-holding process $\theta_t$ in \eqref{theta_dynamics}-\eqref{theta_terminal} is admissible and optimal.  In this step, we let $(\theta_t,P_t,Q_t)$ be the solution of the SDE system \eqref{q_dynamics}, \eqref{p_dynamics} and \eqref{theta_dynamics} on $t\in [0,T)$. The existence of the solution is ensured by the continuity of $\lambda,\mu,r,s,\beta,\alpha:[0,T) \to \bR$.
 From the Kalman-Bucy result in \lemref{filter_lem}, the representations in \eqref{p_rational} and \eqref{q_rational} hold for $t\in [0,T)$ and we have
\begin{equation}
\begin{split}\label{Sigmas}
\Sigma_1(t) &= \bE[(\ta - \theta_t -Q_t)^2],\quad  t\in [0,T),\\
\Sigma_2(t) &= \bE[(\tv-P_t)(\ta - \theta_t -Q_t)],\quad  t\in [0,T).
\end{split}
\end{equation}
We observe that \eqref{q_rational} and $Q_t\in \FF_t^M$ imply
\begin{align}
\bE\left[ Q_t(\ta-\theta_t-Q_t) \right]=\bE\left[ Q_t\bE\left[\ta-\theta_t-Q_t | \FF_t^M\right] \right]=0,\quad t\in[0,T).\label{QX0}
\end{align}
Then, for $t\in [0,T)$, the dynamics $dQ_t$ in \eqref{q_dynamics} and It\^o's lemma produce 
\begin{align}\label{S3a}
\begin{split}
\Sigma_3(t):&= \bE[Q_t^2]\\
&=\bE\left[ \int_0^t 2Q_u dQ_u + \sigma_w^2 r(u)^2 du   \right]\\
&=\int_0^t  \Big(- 2\alpha(u)\Sigma_3(u)+ \sigma_w^2 r(u)^2 \Big) du,
\end{split}
\end{align}
where the last equality is due to \eqref{QX0} and \eqref{lm_exp}-\eqref{rs_exp}. By computing the time derivative, we get the ODE
\begin{align}\label{S3b}
\begin{split}
\Sigma_3'(t)&=-2\alpha(t)\Sigma_3(t) + \sigma_w^2 r(t)^2, \quad \Sigma_3(0)=0.
\end{split}
\end{align}
Similarly, the function $\Sigma_4(t):=\bE[(\tv-P_t)^2]$, $t\in[0,T)$, satisfies the ODE
\begin{align}
\Sigma_4'(t)&=-\sigma_w^2 \lambda(t)^2, \quad \Sigma_4(0)=\sigma_\tv^2, \label{Sigma4_ode}
\end{align}
where we have used $dP_t$ in \eqref{p_dynamics}, the representation of $\Sigma_2$ in \eqref{Sigmas}, and  \eqref{lm_exp}.

The boundedness of $r,\alpha,\lambda$ on $t\in [0,T]$ and the above ODEs  give
\begin{align}
\sup_{t\in [0,T)} \Sigma_i(t)<\infty,\quad \textrm{for}\quad i \in\{1,2,3,4\}.\label{Sigma_bdd}
\end{align}
The explicit expressions of $\beta(t)$ in \eqref{equilibrium_explicit} and $\Sigma_1(t)$ in \eqref{ode_explicit_solution} produce
\begin{align}
\sup_{t\in [0,T)} \beta(t)^2\Sigma_1(t)=
\sup_{t\in [0,T)}
\frac{\sigma_w^4 r_0^2(1+2r_0)^{\frac{3}{2}}}{\sigma_\ta^2 (1+r_0)^2}
 \frac{(1+r(t))^2}{(1+2r(t))^{\frac{3}{2}}}
 <\infty. \label{betaSigma_bdd}
\end{align}
Based on these bounds, we can verify the conditions in Definition \ref{ad_def}:

(i) The following observation and \eqref{Sigma_bdd}-\eqref{betaSigma_bdd} produce $\bE\left[\int_0^T \theta_t^2 dt\right]<\infty$:
\begin{align}
\bE[\theta_t^2]&=\bE\left[ \left( \int_0^t \big(\beta(u)(\ta-\theta_u-Q_u)+\alpha(u)Q_u\big)  du  \right)^2  \right]\\
&=C \int_0^t  \Big( \beta(u)^2 \Sigma_1(u)+\alpha(u)^2 \Sigma_3(u)\Big) du, \quad t\in[0,T),
\end{align}
for a constant $C$ independent of $t$.

(ii) The expectation $\bE\left[\int_0^T Q_t^2 dt\right]=\int_0^T\Sigma_3(t)dt$ is finite by \eqref{Sigma_bdd}. It\^o's lemma gives the dynamics of $Z_t:=e^{\int_0^t \alpha(u)du} Q_t$, $t\in [0,T)$, as
$$
dZ_t = e^{\int_0^t \alpha(u)du} r(t)\big( \beta(t)(\ta -\theta_t-Q_t)dt + \sigma_wdW_t \big), 
$$
where we have used the relation \eqref{rs_exp}. The representation \eqref{q_rational} ensures that the process with dynamics  $\beta(t)(\ta -\theta_t-Q_t)dt + \sigma_wdW_t$ is the market makers' innovations process. Therefore, $Z_t$ is a martingale with respect to  $\FF_t^M$. Furthermore, \eqref{betaSigma_bdd} implies that $Z_t$ is a square integrable martingale uniformly bounded in $\mathcal{L}_2(\PP)$:
\begin{align*}
\sup_{t\in [0,T)}\bE[ Z_t^2] &= \sup_{t\in [0,T)}\bE\left[ \left(  \int_0^t  e^{\int_0^u \alpha(v)dv}r(u)\beta(u)(\ta -\theta_u - Q_u)du + e^{\int_0^u \alpha(v)dv} r(u)  dW_u \right)^2     \right]<\infty,
\end{align*}
where we have used the boundedness of $r(t)$ and $\alpha(t)$ on $t\in[0,T]$ and \eqref{Sigma_bdd}-\eqref{betaSigma_bdd}. Therefore, a finite limit $\lim_{t\uparrow T}Z_{t}$ exists, which implies the existence of $Q_{T-}$ (here we also use the continuity of $\alpha$ on $t\in[0,T]$).

(iii) Similarly to (ii) above, the process $(P_t)_{t\in [0,T)}$ is a square integrable martingale uniformly bounded in $\mathcal{L}_2(\PP)$. This implies the existence of $P_{T-}$.

Finally, we show $\theta_t$'s optimality. Because $\Delta \theta_t=0$ and $[\theta,\theta]_t=0$ for $t\in [0,T)$, the inequality \eqref{value_ineq} becomes an equality. The representation of $\Sigma_1$ in \eqref{Sigmas} and the boundary condition $\Sigma_1(T-)=0$ in \eqref{BC22} give
\begin{align}
\begin{split}
0 &=\lim_{t\uparrow T}   \bE[(\ta-\theta_{t}-Q_{t})^2]\\
&\ge \bE[(\ta-\theta_{T-}-Q_{T-})^2],
\end{split}
\end{align}
where the inequality comes from Fatou's lemma. Therefore, 
\begin{align}
X_{T-}=\ta-\theta_{T-}-Q_{T-}=0, \quad \textrm{almost surely,} \label{predictable_jump}
\end{align}
and the inequality \eqref{jump_ineq} becomes an equality.
All in all, $\theta_t$ in \eqref{theta_dynamics}-\eqref{theta_terminal} is optimal and satisfies
\begin{align*}
\bE\left[ \int_{[0,T]} (\ta-\theta_{u-})dP_u \Big| \FF_0^I \right]=I\ta^2+ K(0).
\end{align*} 

\medskip

\noindent{\bf Step 3/3:} It remains to verify the property in \defref{equilibrium_def} (ii). For $t\in [0,T)$, this follows from the Kalman-Bucy result in \lemref{filter_lem} in Appendix \ref{app:KB}. Therefore, we  only need to verify that  \eqref{p_rational} holds for $t=T$. To this end, we observe 
\begin{align}
\bE\left[\tv | \FF_{T}^M\right]
&=\bE\left[\tv | \FF_{T-}^M\right]\nonumber\\
&=P_{T-}=P_T, \label{P_no_jump}
\end{align}
where the first equality is due to \eqref{predictable_jump} and $Q_{T-}\in \FF_{T-}^M$, and the last equality is due to \eqref{p_terminal} and \eqref{predictable_jump}.
\end{proof}

Instead of Gaussian dividends $\tv$, Back (1992) and Cho (2003) consider $h(\tv)$  for a strictly increasing function $h:\mathbb{R}\to \mathbb{R}$ as the stock's terminal dividends. In a similar spirit, instead of the terminal target $\ta$, it would be interesting to consider targets of the form $h(\ta)$ for a strictly increasing function $h$.  Unfortunately, our setting does not immediately extend this case and we leave it for future research to identify an extended set of sufficient state processes for such non-Gaussian targets.

\section{Properties of equilibrium}

This section discusses new and desirable model features produced by the insider's trading constraint $\theta_T=\ta$. 
First, our equilibrium aggregate orders are  positively autocorrelated whereas in Kyle (1985) this process is a Brownian motion (hence, zero autocorrelation).   \c{C}et\.in and Danilova (2016) consider market makers with exponential utilities and produce mean reversion in the equilibrium aggregate holding process.  

Second,  our equilibrium price-impact function $\lambda(t)$ is decreasing over the trading interval $t\in[0,T]$. Barardehi and Bernhardt (2021) show that decreasing price-impact functions are empirically desirable, and while the discrete-time model in Kyle (1985) also has a decreasing price-impact function, price impact is constant in the continuous-time model in Kyle (1985).  In Kyle models with an exponentially distributed time horizon, Back and Baruch (2004), Caldentey and Stacchetti (2010), and \c{C}et\.in (2018) show that $\lambda$ is a supermartingale. Baruch (2002) and Cho (2003) show that an insider with exponential utility produces a deterministic and decreasing $\lambda$. For nonlinear pricing rules and market makers with exponential utilities, \c{C}et\.in and Danilova (2016) can produce both super- and submartingales for $\lambda$.

Third, in Kyle (1985), the insider's optimal order rate process $x'_t$ satisfies $\bE[x'_t | \tv]=\frac{\sigma_w}{\sigma_\tv}\tv$ for all $t\in[0,T]$. However,  empirically, there is more trading in the morning and afternoon relative to the middle of the day when $[0,T]$ is one trading day (see, e.g., Choi, Larsen, and Seppi (2019)). In expectation, our model produces such optimal $U$ shaped trading behavior for the insider. 

Fourth, our constrained insider optimally places a terminal block order $\Delta \theta_T=\ta - \theta_{T-}\neq 0$ whereas in Back (1992) such orders are suboptimal. However, the terminal block order is predictable by the market makers (i.e., $\ta - \theta_{T-}$ is $\FF_{T-}^M$ measurable) and does not induce a jump of the stock price (i.e., $\Delta p_T =0$). 

Fifth, unlike Kyle (1985) and Back (1992), the equilibrium stock price at the terminal time is different from the insider's initial expectation of $\tv$ (i.e., $P_T \neq \bE[\tv|\mathcal{F}_0^I] )$. This means that our constrained insider still has some unrevealed private information at the terminal time. Barger and Donnelly (2021) show that transaction costs can also produce this property.

The next proposition states the aforementioned characteristics of our equilibrium.

\begin{proposition}\label{prop_properties} In the setting of Theorem \ref{main_thm}, we have:
\begin{enumerate}
\item The scaled autocorrelation of equilibrium aggregate holdings  is positive
\begin{align}\label{scaledauto}
\lim_{h \downarrow0} \frac{1}{h}\frac{\bE[(Y_t-Y_{t-h})(Y_{t+h}-Y_t)] }{\sqrt{\mathbb{V}[Y_t-Y_{t-h}]\mathbb{V}[Y_{t+h}-Y_{t}]}}
&=\alpha(t)\left( \frac{\alpha(t) \Sigma_3(t)}{\sigma_w^2} + r(t)  \right)>0,\quad t\in(0,T),
\end{align}
where the positive function $\Sigma_3(t)$ is defined in \eqref{S3a}. Furthermore, in equilibrium, the process $Q_t$ in \eqref{q_dynamics} is mean reverting.

\item The price-impact function $\lambda(t)$ is decreasing for $t\in[0,T]$.

\item For $\ta \neq 0$, the mapping $[0,T) \ni t \mapsto \frac{1}{\ta}\bE[\theta'_t|\FF_0^I]$ is $U$ shaped where $\theta'_t$ is  the insider's equilibrium order-rate process\footnote{It turns out that $\bE[\theta'_t|\FF_0^I]$ is linear in $\ta$; hence, the ratio $\bE[\theta'_t|\FF_0^I]/\ta$ does not depend on $\ta$.} 
\begin{align}\label{thetaprime}
\theta'_t := \beta(t)(\ta-\theta_t)+ \big(\alpha(t) -\beta(t) \big)Q_t,\quad t\in[0,T),
\end{align}
and the terminal block order satisfies $0<\frac{1}{\ta}\bE[\Delta \theta_T | \FF_0^I]<1$.
\item $\Delta\theta_T = Q_{T-}\neq 0$ almost surely, $Q_{T-}\in \FF_{T-}^M$, and $\Delta P_T =0$.
\item $P_T \neq \bE[\tv|\mathcal{F}_0^I] $ almost surely.
\end{enumerate}
\end{proposition}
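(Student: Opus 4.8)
\emph{Part (1).} The plan is to compute the numerator and denominator of the scaled autocorrelation separately, using the dynamics $dY_t = \theta'_t\,dt + \sigma_w\,dW_t$ with $\theta'_t$ from \eqref{thetaprime}. For the denominator, as $h\downarrow 0$ we have $\mathbb{V}[Y_t - Y_{t-h}] = \sigma_w^2 h + o(h)$ (the drift contributes only $O(h^2)$), so $\sqrt{\mathbb{V}[Y_t-Y_{t-h}]\mathbb{V}[Y_{t+h}-Y_t]} = \sigma_w^2 h + o(h)$. For the numerator, $Y_t - Y_{t-h} = \sigma_w(W_t - W_{t-h}) + o(\sqrt h)$ while $Y_{t+h}-Y_t = \int_t^{t+h}\theta'_u\,du + \sigma_w(W_{t+h}-W_t)$; the two Brownian increments are independent, so the leading contribution comes from $\bE\big[\sigma_w(W_t - W_{t-h})\int_t^{t+h}\theta'_u\,du\big]$. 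Expanding $\theta'_u$ and using that $\theta_u, Q_u$ depend on $W$ up to time $u$, a short computation of $\bE[(W_t-W_{t-h})\,\theta'_u]$ for $u\in(t,t+h)$ extracts a factor proportional to $h$ involving $\alpha(t)$, $r(t)$, and $\Sigma_3(t) = \bE[Q_t^2]$ (via the relation \eqref{rs_exp} linking $s$ and $\alpha$, and \eqref{q_dynamics}). Dividing by $\sigma_w^2 h$ and then by the extra $h$ in \eqref{scaledauto} yields the stated formula; positivity is then immediate from $\alpha(t)>0$, $r(t)>0$, $\Sigma_3(t)>0$ on $(0,T)$, which follow from \eqref{equilibrium_explicit} and \eqref{S3b}. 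Mean reversion of $Q_t$ follows because \eqref{q_dynamics} with $s(t) = -\alpha(t)(1+r(t)) < 0$ is an Ornstein--Uhlenbeck-type equation with negative mean-reversion coefficient.

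\emph{Parts (2), (4), (5).} Monotonicity of $\lambda$ in (2): from \eqref{equilibrium_explicit}, $\lambda(t) = 2I\,(1+r(t))^2/(1+2r(t))$; since $x\mapsto (1+x)^2/(1+2x)$ is increasing for $x>0$ and $r(t)$ is decreasing (by \eqref{rkey1}, $r'(t)<0$ on $[0,T)$), $\lambda$ is decreasing. Part (4) is essentially already established inside the proof of Theorem~\ref{main_thm}: \eqref{predictable_jump} gives $\ta - \theta_{T-} = Q_{T-}$, hence $\Delta\theta_T = \ta - \theta_{T-} = Q_{T-}$; that $Q_{T-}\in\FF_{T-}^M$ is clear from \eqref{q_dynamics}; $Q_{T-}\neq 0$ a.s.\ because $\Sigma_3(T-) = \bE[Q_{T-}^2]>0$, which I verify by solving \eqref{S3b} explicitly (or noting $\Sigma_3(t)$ is strictly increasing near $0$ and the sign of its solution); and $\Delta P_T = 0$ is \eqref{P_no_jump}. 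Part (5): $P_T = P_{T-}$ by (4), and $P_{T-}$ is a nondegenerate Gaussian with $\bE[(\tv - P_{T-})^2] = \Sigma_4(T-)$; solving \eqref{Sigma4_ode} gives $\Sigma_4(T-) = \sigma_\tv^2 - \sigma_w^2\int_0^T\lambda(u)^2\,du > 0$ provided this integral is strictly less than $\sigma_\tv^2$, which one checks from the explicit form of $\lambda$ (alternatively, $\Sigma_2(T-) = 0$ from \eqref{Sigma2} together with $\Sigma_1(T-)=0$ forces, via the Kalman--Bucy covariance identities, that the filtration does not fully reveal $\tv$), so $P_T\neq\bE[\tv|\FF_0^I] = \rho\frac{\sigma_\tv}{\sigma_\ta}\ta$ almost surely.

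\emph{Part (3).} The plan is to derive a closed-form ODE system for the deterministic functions $m(t) := \bE[\theta_t|\FF_0^I]$ and $n(t) := \bE[Q_t|\FF_0^I]$. Taking $\FF_0^I$-conditional expectations in \eqref{theta_dynamics} and \eqref{q_dynamics} (and using $\bE[dW_t|\FF_0^I]=0$) gives $m'(t) = \beta(t)(\ta - m(t) - n(t)) + \alpha(t) n(t)$ and $n'(t) = r(t)\beta(t)(\ta - m(t) - n(t)) + (s(t) + r(t)\beta(t))\,n(t)$, with $m(0)=n(0)=0$; hence $\bE[\theta'_t|\FF_0^I] = m'(t) = \beta(t)(\ta - m(t)) + (\alpha(t)-\beta(t))n(t)$. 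I expect the cleanest route is to change variables to $d(t) := \ta - m(t) - n(t) = \bE[X_t|\FF_0^I]$ and $n(t)$; since $\beta(t) = \alpha(t)(1+r(t))/? $ — more precisely, comparing \eqref{equilibrium_explicit} one has $\beta(t)(1+3r(t)) \cdot (\text{something}) $, so one can reduce to a linear system whose coefficients are explicit rational functions of $r(t)$, and then use the explicit solution \eqref{ode_explicit_solution} (equivalently, change the time variable to $r$ itself, since $r$ is a strictly monotone function of $t$). The $U$ shape of $t\mapsto m'(t)/\ta$ then follows by showing this function is positive, tends to finite positive limits at $t=0$ and $t=T-$, and has exactly one interior critical point which is a minimum — established by computing $m''(t)/\ta$ and checking it changes sign once. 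The terminal block bound $0 < \bE[\Delta\theta_T|\FF_0^I]/\ta = (\ta - m(T-))/\ta < 1$ follows from $\bE[\Delta\theta_T|\FF_0^I] = \bE[Q_{T-}|\FF_0^I] = n(T-)$ (using \eqref{predictable_jump}), together with $0 < n(T-)/\ta$ and $n(T-) < \ta$, i.e.\ $m(T-)/\ta \in (0,1)$, which I read off from the monotonicity of $m$ and the sign of $d(t)$.

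\emph{Main obstacle.} The genuinely delicate part is Part (3): verifying the $U$ shape rigorously requires controlling the sign of $m''(t)$ (a second derivative of the solution of a nonautonomous linear ODE system with coefficients that blow up or vanish as $r(t)\to 0$ at $t=T-$), and confirming there is exactly one sign change rather than several. The boundary behavior at $t\uparrow T$, where $r(t)\to 0$ and several coefficient functions degenerate, needs care; the explicit parametrization \eqref{ode_explicit_solution} (treating $r$ as the independent variable) is what makes this tractable, reducing the claim to a monotonicity/convexity statement about elementary functions of $r\in(0,r_0]$.
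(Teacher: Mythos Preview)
Your overall architecture matches the paper's, but two steps contain genuine errors.

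\textbf{Part (1).} By writing $Y_t-Y_{t-h}=\sigma_w(W_t-W_{t-h})+o(\sqrt h)$ and then keeping only the cross term $\bE\big[\sigma_w(W_t-W_{t-h})\int_t^{t+h}\theta'_u\,du\big]$, you drop a contribution of the same order $h^2$ as the one you retain. The drift--drift product
\[
\bE\Big[\int_{t-h}^t\theta'_s\,ds\;\int_t^{t+h}\theta'_u\,du\Big]=h^2\,\bE[(\theta'_t)^2]+o(h^2)
\]
must be kept: with $\sigma_w=1$, $\bE[(\theta'_t)^2]=\beta(t)^2\Sigma_1(t)+\alpha(t)^2\Sigma_3(t)$ (using the orthogonality $\bE[Q_tX_t]=0$), while the cross term you identify gives $h^2(\alpha(t)-\beta(t))r(t)$ and does \emph{not} involve $\Sigma_3$. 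Only after adding the two and using $\beta(t)\Sigma_1(t)=r(t)$ do the $\beta$--terms cancel to leave $\alpha(t)\big(\alpha(t)\Sigma_3(t)+r(t)\big)$. Your route, as written, produces the wrong formula.

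\textbf{Part (5).} Showing $\Sigma_4(T-)=\bE[(\tv-P_{T-})^2]>0$ yields $P_T\neq\tv$ a.s., not $P_T\neq\bE[\tv\,|\,\FF_0^I]=\rho\frac{\sigma_\tv}{\sigma_\ta}\ta$ a.s. You must instead bound $\bE\big[(\rho\frac{\sigma_\tv}{\sigma_\ta}\ta-P_t)^2\big]=\Sigma_4(t)-(1-\rho^2)\sigma_\tv^2$ away from zero at $t=T-$; the paper does this via the explicit solution of \eqref{Sigma4_ode}, which gives a strictly positive limit $\rho^2\sigma_\tv^2\sqrt{1+2r_0}/(1+r_0)^2$.

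\textbf{Part (3).} Your ODE setup for $(m,n)$ is exactly the paper's $(f,g)$ system, and your plan to reparametrize by $r$ is the right instinct. The paper sharpens ``$m''$ changes sign once'' to the cleaner statement $f'''(t)>0$ on $[0,T)$ (so $f'$ is convex), combined with $f''(0)<0<f''(T-)$; this is proved from closed-form solutions of the $(f,g)$ system in terms of $r(t)$. The terminal-block bound is then read off from the explicit value $1-f(T-)=\big(\sqrt{1+2r_0}-1+r_0(r_0-1)\big)/\big(r_0(1+2r_0)\big)\in(0,1)$ rather than from qualitative monotonicity arguments.

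Parts (2) and (4) are fine; your monotonicity argument for $\lambda$ is a legitimate shortcut relative to the paper's direct computation of $\lambda'(t)$, and your $\Sigma_3(T-)>0$ route to $Q_{T-}\neq 0$ a.s.\ (via Gaussianity) is valid.
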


\begin{proof} To simplify, we give the proof for $\sigma_w:=1$.  

(1): Based on $dQ_t$ in \eqref{q_dynamics}, the dynamics of $\theta'_t$ in \eqref{thetaprime} have the form
\begin{align}
d\theta'_t &= A_t dt +  \big(\alpha(t) -\beta(t) \big) r(t) dW_t,\quad t\in[0,T),
\end{align}
for some integrable process $A_t$. For $h>0$, we have
\begin{align*}
\bE[(Y_t-Y_{t-h})(Y_{t+h}-Y_t)] 
&=\bE\left[ \left( \int_{t-h}^t\theta'_s ds + W_t-W_{t-h} \right) \left( \int_{t}^{t+h}\theta'_s ds + W_{t+h}-W_{t} \right)   \right]\\
&=\bE\left[ \left( \int_{t-h}^t\theta'_s ds + W_t-W_{t-h} \right)  \int_{t}^{t+h}\theta'_s ds     \right]\\
&=\bE\left[ \int_{t-h}^t\theta'_s ds  \int_{t}^{t+h}\theta'_s ds \right]+ \int_t^{t+h} \bE[(W_t-W_{t-h})\theta'_s] ds.
\end{align*}
The first term above can be approximated as $h^2 \bE[(\theta_t')^2]$ for $h>0$ close to $0$. For the second term, we let $s\in [t,t+h]$ and compute
\begin{align*}
\bE[(W_t-W_{t-h})\theta'_s]
&=\bE\left[ (W_t-W_{t-h})\left(\theta_0'+ \int_0^s \Big(A_u du + \big(\alpha(u)-\beta(u)\big)r(u) dW_u \Big) \right)   \right]\\
&=\bE\left[ (W_t-W_{t-h}) \int_{t-h}^s \Big(A_u du + \big(\alpha(u)-\beta(u)\big)r(u)dW_u \Big)    \right]\\
&=\int_{t-h}^t \big(\alpha(u)-\beta(u)\big)r(u) du + O(h^{3/2}),
\end{align*}
where we used the following observations:
\begin{align*}
&\bE\left[ (W_t-W_{t-h})\left( \int_{t-h}^t \Big(A_u du + \big(\alpha(u)-\beta(u)\big)r(u)dW_u \Big) \right)   \right] \\
& =\bE\left[ (W_t-W_{t-h}) \int_{t-h}^t A_u du \right] + \int_{t-h}^t \big(\alpha(u)-\beta(u)\big)r(u) du \\\
&=  \int_{t-h}^t  \big(\alpha(u)-\beta(u)\big)r(u) du+O(h^{3/2}), \\
&\bE\left[ (W_t-W_{t-h})\left( \int_{t}^s A_u du +  \big(\alpha(u)-\beta(u)\big)r(u)  dW_u  \right)   \right]\\
&=\bE\left[ (W_t-W_{t-h}) \int_{t}^s A_u du\right]\\
&=O(h^{3/2}).
\end{align*}
Therefore, we obtain the scaled autocorrelation:
\begin{align*}
\lim_{h\downarrow 0} \frac{1}{h}\frac{\bE[(Y_t-Y_{t-h})(Y_{t+h}-Y_t)] }{\sqrt{\mathbb{V}[Y_t-Y_{t-h}]\mathbb{V}[Y_{t+h}-Y_{t}]}}
&=\lim_{h\downarrow 0} \frac{\bE[(Y_t-Y_{t-h})(Y_{t+h}-Y_t)] }{h^2}\\
&= \bE[(\theta_t')^2] + \big(\alpha(t)-\beta(t)\big)r(t)\\
&=\beta(t)^2 \Sigma_1(t)+ \alpha(t)^2 \Sigma_3(t) +\big(\alpha(t)-\beta(t)\big)r(t)\\
&=\alpha(t)\left(\alpha(t) \Sigma_3(t)+ r(t)  \right)>0,
\end{align*}
where the third equality is due to \eqref{Sigmas} and \eqref{QX0}, and the last equality is due to $r(t)=\beta(t) \Sigma_1(t)$ from \eqref{rs_exp}.

From the proof of Theorem \ref{main_thm} we know that the market makers' innovations process (i.e.,  a Brownian motion generating the filtration \eqref{sFM}) is given by
$$
dW_t^Y := \beta(t)(\tilde{a} -\theta_t-Q_t)dt +dW_t,\quad W_0^Y:=0.
$$
The Brownian motion $W^Y_t$ allows us to rewrite the dynamics of $Q_t$ from \eqref{q_dynamics} as
\begin{align}
dQ_t &=  -\alpha(t) Q_t dt+ r(t) dW_t^Y.\label{eqQ}
\end{align}
Because $\alpha(t)$ from \eqref{equilibrium_explicit} is positive, \eqref{eqQ} shows that $Q_t$ is mean reverting.

(2): Using the ODE \eqref{rkey1} and the expressions of $\Sigma_1(t)$ and $\lambda(t)$ in \eqref{ode_explicit_solution} and \eqref{equilibrium_explicit}, we obtain
\begin{align*}
\lambda'(t)= -\frac{2\rho \sigma_w^2 \sigma_\tv r_0^3 (1+2r_0)^{\frac{5}{2}}}{\sigma_\ta^3 (1+r_0)^4} \frac{r(t) (1+r(t))^4}{(1+3r(t))(1+2r(t))^{\frac{5}{2}}}<0 \quad \textrm{for} \quad t\in (0,T).
\end{align*}

(3): Let $f,g: [0,T)\to \bR$ be defined as
\begin{align*}
f(t):=\frac{\bE[\theta_t | \FF_0^I]}{\ta}, \quad g(t):=\frac{\bE[Q_t | \FF_0^I]}{\ta}. 
\end{align*}
The SDEs \eqref{q_dynamics} and \eqref{theta_dynamics} and the relation $s(t)=-(1+r(t))\alpha(t)$ from \eqref{rs_exp} produce the following ODEs for $f$ and $g$:
\begin{align*}
f'(t)&=\beta(t)\big(1-f(t)-g(t)\big)+\alpha(t)g(t), \quad f(0)=0,\\
g'(t)
&=r(t)\beta(t)\big(1-f(t)-g(t)\big)-\alpha(t)g(t), \quad g(0)=0.
\end{align*}
We can find explicit expressions of the unique solution of the above ODE system by using  \eqref{equilibrium_explicit}:
\begin{align*}
f(t)&=1-\frac{(1+2r(t))^{\frac{3}{2}}}{r_0\sqrt{1+2r_0}(1+r(t))} + \frac{(1+r_0-r_0^2)(1+2r(t))}{r_0(1+2r_0)(1+r(t))},\\
g(t)&=\frac{1+2r(t)}{1+r(t)}\left( \frac{1+r(t)-r(t)^2}{r_0\sqrt{1+2r_0}\sqrt{1+2r(t)}}- \frac{1+r_0-r_0^2}{r_0(1+2r_0)}   \right).
\end{align*}
These expressions give 
\begin{equation}
\begin{split}\label{f''_boundary}
f''(0)&=-\frac{\sigma_w^4 r_0^4 }{\sigma_\ta^4 (1+3r_0)}<0,\\
f''(T-)&=\frac{3\sigma_w^4 r_0^3(1+2r_0)^2}{\sigma_\ta^4 (1+r_0)^4}\Big(\sqrt{1+2r_0}-1+r_0(r_0-1) \Big)>0,
\end{split}
\end{equation}  
where the second equality is due to $r(T-)=0$ and the second inequality is due to 
\begin{align*}
\begin{cases}
\left(\sqrt{1+2r_0}-1+r_0(r_0-1) \right)\Big|_{r_0=0}=0,\\
\frac{d}{dr_0}\left(\sqrt{1+2r_0}-1+r_0(r_0-1) \right)\Big|_{r_0=0}=0, \\
\frac{d^2}{dr_0^2}\left(\sqrt{1+2r_0}-1+r_0(r_0-1) \right)=2-\frac{1}{(1+2r_0)^{\frac{3}{2}}}>0, \quad \textrm{for} \quad r_0\in (0,\infty).
\end{cases}.
\end{align*}
Let $H:[0,\infty)^2\to \bR$ be defined as
\begin{align*}
H(x,y):=\left(\frac{\sigma_\ta^6 (1+r_0)^6 (1+2r(t))^\frac{7}{2} (1+3r(t))^5}{\sigma_w^6 r_0^5 (1+2r_0)^3 (1+r(t))^5}    f'''(t)\right)\Bigg|_{r(t)=x, \,\, r_0=y}.
\end{align*}
Direct computations produce for $0<x\leq y$:
\begin{align*}
\begin{cases}
H(x,x)=(1+x)(1+3x)^3(1+2x+4x^2)\sqrt{1+2x}>0,\\
H_y(x,x)= \frac{2(1+3x)(1+x(20+x(4+3x)(22+3x(7+4x))))}{\sqrt{1+2x}}>0,\\
H_{yy}(x,y)=\frac{2(1+3y(3+5y))(11+x(55+x(83+33x)))}{(1+2y)^\frac{3}{2}}>0,
\end{cases},
\end{align*}
where $H_y$ and $H_{yy}$ denote partial derivatives. These inequalities imply that 
\begin{align}
H(x,y)>0 \quad \textrm{for}\quad 0<x\leq y. \label{H_ineq}
\end{align}
Since $0< r(t)\leq r_0$ for $t\in [0,T)$, the definition of $H$ and \eqref{H_ineq} produce 
\begin{align*}
0<H(r(t),r_0)=\frac{\sigma_\ta^6 (1+r_0)^6 (1+2r(t))^\frac{7}{2} (1+3r(t))^5}{\sigma_w^6 r_0^5 (1+2r_0)^3 (1+r(t))^5}    f'''(t) \quad \textrm{for}\quad t\in [0,T),
\end{align*}
and we obtain
\begin{align}
f'''(t)>0\quad t\in [0,T). \label{f'_convex}
\end{align}
Combining \eqref{f''_boundary} and \eqref{f'_convex}, we conclude that the map $t\mapsto \frac{\bE[\theta'_t | \FF_0^I]}{\ta}=f'(t)$ is $U$ shaped for $t\in [0,T)$.

Finally, to prove $0<\frac{1}{\ta}\bE[\Delta \theta_T | \FF_0^I]<1$, we observe
\begin{align}
\begin{split}
\frac{1}{\ta}\bE[\Delta \theta_T | \FF_0^I] & = 1-f(T-)\\
&=\frac{\sqrt{1+2r_0}-1+r_0(r_0-1)}{r_0 (1+2r_0)},
\end{split}
\end{align}
where the first equality uses $\Delta \theta_T=\ta-\theta_{T-}$ and the definition of $f$, and the second equality  uses the explicit expression of $f$ and $r(T-)=0$. The conclusion follows because $\frac{\sqrt{1+2r_0}-1+r_0(r_0-1)}{r_0 (1+2r_0)}\in (0,1)$ for $r_0>0$.

(4): $\Delta\theta_T = Q_{T-}$ is from \eqref{predictable_jump} and $\Delta P_T=0$ is from \eqref{P_no_jump}. We obtain $Q_{T-}\neq 0$ a.s. because $\frac{1}{\ta}\bE[Q_{T-} | \FF_0^I]=\frac{1}{\ta}\bE[\Delta \theta_T | \FF_0^I]\neq 0$ by part (3).

(5): The explicit solution  of \eqref{Sigma4_ode} is given by
\begin{align}\label{notequal0}
\begin{split}
 \Sigma_4(t) =\frac{\rho^2 \sigma_\tv^2 \sqrt{1+2r_0}}{(1+r_0)^2} \frac{(1+r(t))^2}{\sqrt{1+2r(t)}}+ (1-\rho^2)\sigma_\tv^2.
\end{split}
\end{align}
Because $\tv-\rho\frac{\sigma_\tv}{\sigma_\ta}\ta$ is independent of $\FF_t^I$, we obtain for $t\in [0,T)$ that
\begin{align}
\bE \left[\left(\rho\tfrac{\sigma_\tv}{\sigma_\ta}\ta- P_t \right)^2\right] &= \bE \left[\left(\tv- P_t \right)^2\right]-\bE \left[\left(\tv-\rho\tfrac{\sigma_\tv}{\sigma_\ta}\ta\right)^2\right] \nonumber\\
&=\Sigma_4(t)-(1-\rho^2)\sigma_\tv^2 \nonumber\\
&=\frac{\rho^2 \sigma_\tv^2 \sqrt{1+2r_0}}{(1+r_0)^2} \frac{(1+r(t))^2}{\sqrt{1+2r(t)}}. \label{notequal}
\end{align}
The expression in \eqref{notequal} and Fatou's lemma produce 
$$
\bE \left[\left(\rho\frac{\sigma_\tv}{\sigma_\ta}\ta- P_T \right)^2\right]\ge \frac{\rho^2 \sigma_\tv^2 \sqrt{1+2r_0}}{(1+r_0)^2}>0,
$$ 
where we have used $r(T-)=0$ and $P_T=P_{T-}$.
\end{proof}


Figure~\ref{figure:properties} illustrates the price impact function $\lambda(t)$, the insider's expected order rates $\frac{1}{\ta}\bE[\theta'_t|\FF_0^I]$, the scaled autocorrelation of aggregate holdings, and the remaining (unconditional) variance of 
$$
P_T - \bE[\tv|\mathcal{F}_0^I] = P_T-\rho\frac{\sigma_\tv}{\sigma_\ta}\ta.
$$

\vspace{-1cm}
\begin{figure}[!h]
\begin{center}
	\caption{Graphs of the Kyle's $\lambda(t)$ (1A), the insider's expected order rate $\frac{1}{\ta}\bE[\theta'_t|\sigma(\ta)]$ (1B), autocorrelation of aggregate holdings $Y_t$ (1C), and remaining variance $\bE [(\rho\tfrac{\sigma_\tv}{\sigma_\ta}\ta- P_t )^2]$ (1D). The parameters are $\sigma_w:=1$, $\sigma_\tv:=1$, $\rho:=0.3$, $T:=1$, and $\sigma_\ta:=5$ (\textcolor{myblue}{\bf ---}), $\sigma_\ta:=3$ (\textcolor{myorange}{-\,-\,-}), and $\sigma_\ta:=1$ (\textcolor{mygreen}{- $\cdot$ -}).  }\ \\
\begin{footnotesize}
$\begin{array}{cc}
\includegraphics[width=6cm, height=4.5cm]{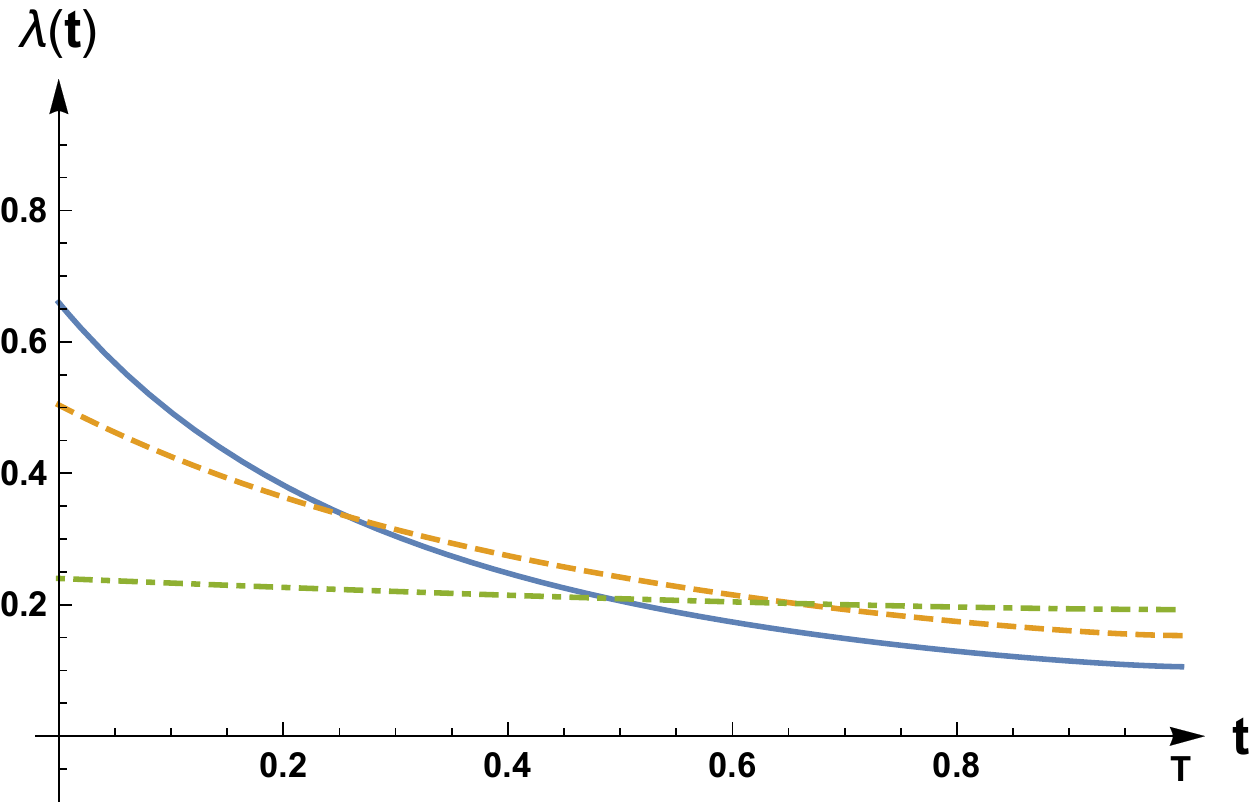} &\includegraphics[width=6cm, height=4.5cm]{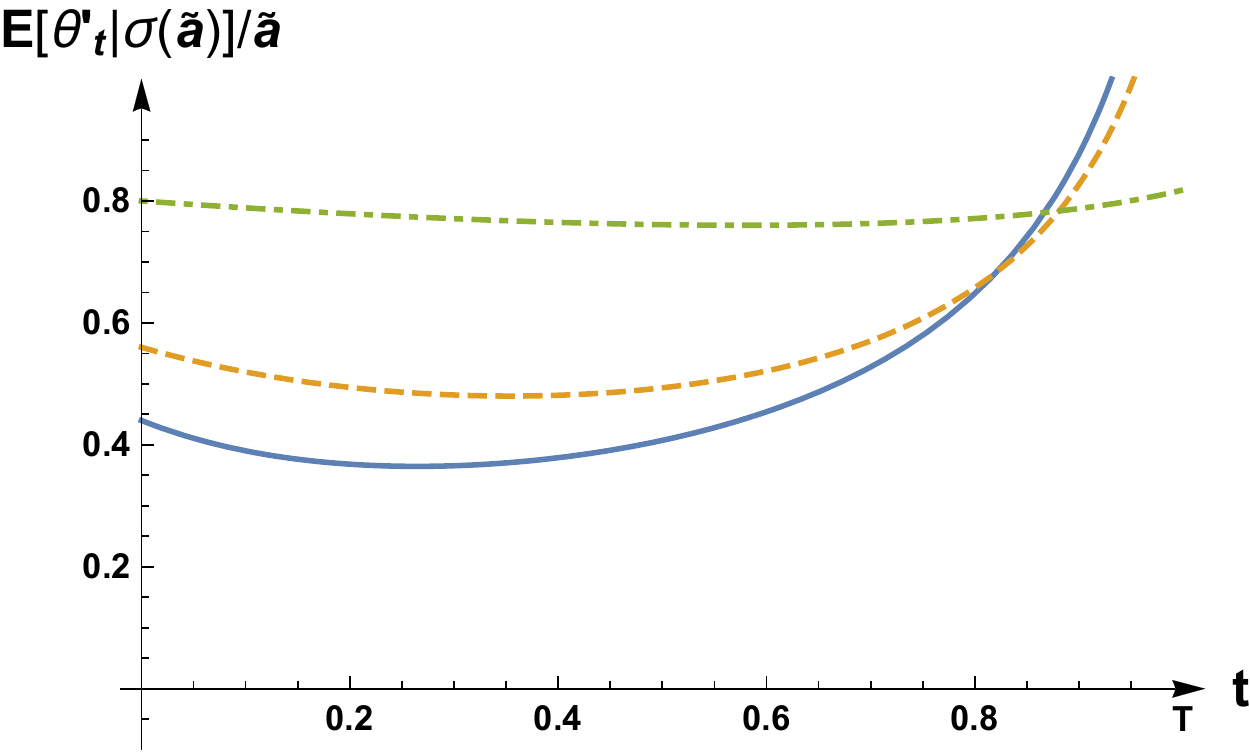} 
\\ 
\text{{\bf 1A:} price impact $\lambda(t)$ in \eqref{equilibrium_explicit}} &\text{\bf 1B:} \text{ expected order rate } \frac{1}{\ta}\bE[\theta'_t|\sigma(\ta)]
\vspace{0.1cm}
\\

\includegraphics[width=6cm, height=4.5cm]{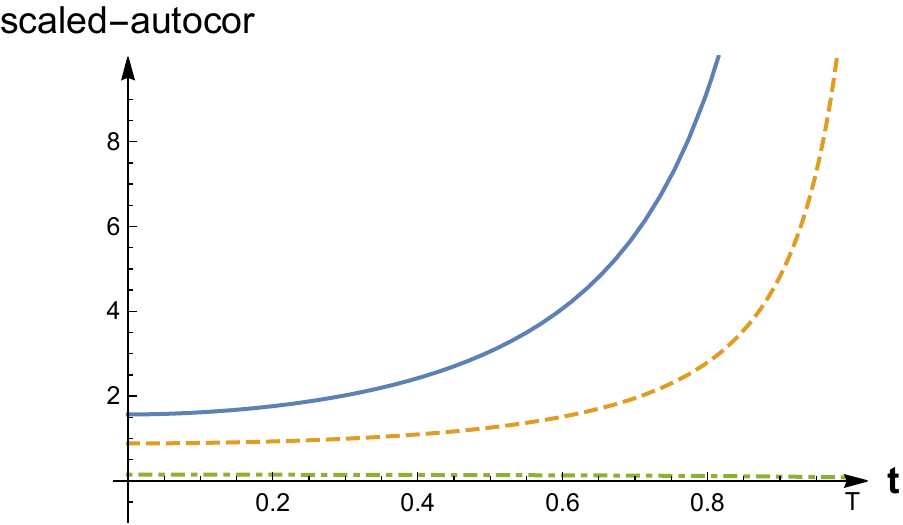} &\includegraphics[width=6cm, height=4.5cm]{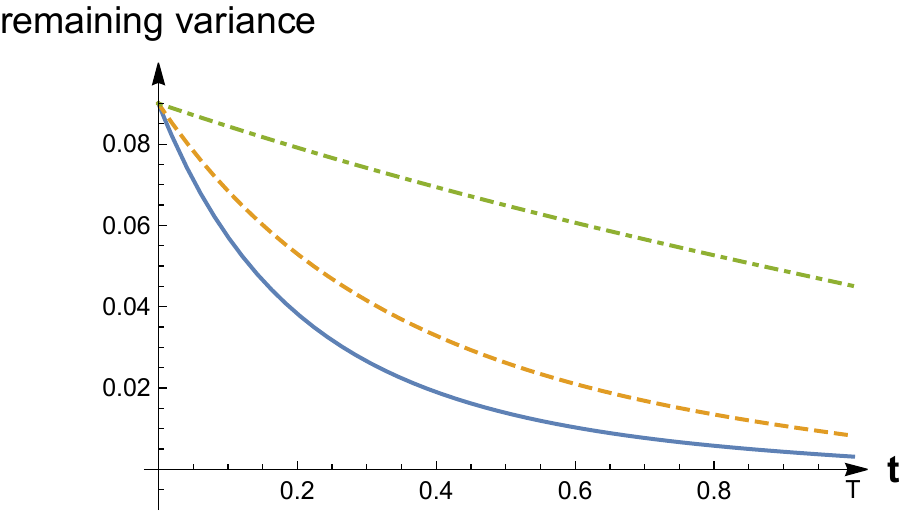} 
\\ 
\text{{\bf 1C:}  scaled autocorrelation in \eqref{scaledauto} } &\text{{\bf 1D:} remaining variance  $\bE [(\rho\tfrac{\sigma_\tv}{\sigma_\ta}\ta- P_t )^2]$ in \eqref{notequal}}\\

\end{array}$
 \end{footnotesize}
	\label{figure:properties}
\end{center}
\end{figure}

\appendix

\bigskip

\section{Kalman-Bucy filtering}\label{app:KB}

The following result is a special case of the classic Kalman-Bucy filtering result for Gaussian processes. 

\begin{lemma}[Kalman-Bucy] \label{filter_lem} Let the functions $(\Sigma_1,\lambda,\mu,r,s,\beta,\alpha)$ be as in \eqref{equilibrium_explicit}, let $\Sigma_2$ be as in \eqref{Sigma2}, and let the processes $(\theta_t, P_t, Q_t)$ be the solutions of the SDEs \eqref{q_dynamics}, \eqref{p_terminal}, and \eqref{theta_dynamics} for $t\in[0,T)$.
Then, for $t\in [0,T)$, the representations in \eqref{p_rational}, \eqref{q_rational}, and \eqref{Sigmas} hold.

\end{lemma}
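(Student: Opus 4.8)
The plan is to recognise \lemref{filter_lem} as a direct instance of the classical Kalman--Bucy filter, once the right state-space model is identified. Fix the open interval $[0,T)$, on which all coefficients in \eqref{equilibrium_explicit} are continuous and bounded (they blow up only at $t=T$, where $r(T-)=0$). Take as \emph{observation} the aggregate-order process $Y_t=\theta_t+\sigma_wW_t$ of \eqref{Y} and as \emph{hidden signal} the pair $Z_t:=(\ta-\theta_t,\ \tv)^\top$. Differentiating \eqref{theta_dynamics} and using \eqref{Y} gives the linear system
\begin{align*}
d(\ta-\theta_t)&=\big(-\beta(t)(\ta-\theta_t)+(\beta(t)-\alpha(t))Q_t\big)\,dt,\qquad d\tv=0,\\
dY_t&=\big(\beta(t)(\ta-\theta_t)+(\alpha(t)-\beta(t))Q_t\big)\,dt+\sigma_w\,dW_t,
\end{align*}
in which $Q_t$ from \eqref{q_dynamics} is $\FF_t^M$-adapted and therefore enters legitimately as a \emph{known} (observation-measurable) forcing term, the matrix coefficients multiplying the hidden state being deterministic. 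The initial vector $Z_0=(\ta,\tv)^\top$ is Gaussian with mean zero, variances $\sigma_\ta^2,\sigma_\tv^2$ and covariance $\rho\sigma_\ta\sigma_\tv$, and is independent of $W$; and since the insider trades continuously at $t=0$ one has $Y_0=0$, so $\FF_0^M$ in \eqref{sFM} is trivial. One may also observe that the whole system $(\theta_t,Q_t,Y_t,\ta,\tv)$ is driven linearly by $W$ and the Gaussian pair $(\ta,\tv)$, hence is jointly Gaussian, so the conditional law of $Z_t$ given $\FF_t^M$ is genuinely Gaussian with deterministic covariance.

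Next I would invoke the classical Kalman--Bucy theorem for this linear Gaussian model with a known forcing input: on $[0,T)$ the conditional mean $m_t:=\bE[Z_t\mid\FF_t^M]=(\hat\Theta_t,\hat v_t)^\top$ solves the linear filtering SDE driven by the innovation $dW_t^Y:=\sigma_w^{-1}\big(dY_t-\bE[dY_t\mid\FF_t^M]\big)$ with gain $\gamma(t)H(t)^\top/\sigma_w^2$, $H(t):=(\beta(t),0)$, while the conditional covariance $\gamma(t):=\bE\big[(Z_t-m_t)(Z_t-m_t)^\top\mid\FF_t^M\big]$ is deterministic and solves the matrix Riccati ODE
\[
\dot\gamma(t)=F(t)\gamma(t)+\gamma(t)F(t)^\top-\tfrac{1}{\sigma_w^2}\gamma(t)H(t)^\top H(t)\gamma(t),\qquad F(t):=\mathrm{diag}(-\beta(t),0),
\]
with $\gamma(0)$ the initial covariance above. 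Written entrywise, this produces scalar equations for $\hat\Theta_t$, $\hat v_t$, $\gamma_{11}$ and $\gamma_{12}$ (the entry $\gamma_{22}$, equal to $\bE[(\tv-P_t)^2]$, is not needed here).

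The final step is to match this filter with the explicit objects of \thmref{main_thm}. The mechanism is that the consistency relations $r=\beta\Sigma_1/\sigma_w^2$, $\lambda=\beta\Sigma_2/\sigma_w^2$, $s=-\alpha(1+r)$, $\mu=-\alpha\lambda$ of \eqref{lm_exp}--\eqref{rs_exp} hold verbatim for the explicit functions in \eqref{equilibrium_explicit} and \eqref{Sigma2} (a short check from the formulas, using \eqref{ode_explicit_solution}). Under these identities the $(1,1)$ entry of the Riccati ODE is exactly \eqref{S1_ode} and the $(1,2)$ entry is exactly \eqref{S3_ode}; since $\Sigma_1$ solves \eqref{S1_ode} with $\Sigma_1(0)=\sigma_\ta^2$ by \lemref{ode_existence_lemma}, and $\Sigma_2$ of \eqref{Sigma2} solves \eqref{S3_ode} with $\Sigma_2(0)=\rho\sigma_\ta\sigma_\tv$ by a one-line computation using \eqref{rkey1}, uniqueness for this Riccati system forces $\gamma_{11}=\Sigma_1$ and $\gamma_{12}=\Sigma_2$ on $[0,T)$. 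Consequently the filter gain is $(r(t),\lambda(t))$, and a direct comparison of the filter SDE for $\hat\Theta_t$ with \eqref{q_dynamics} shows that $\hat\Theta_t-Q_t$ satisfies the homogeneous linear ODE $d(\hat\Theta_t-Q_t)=-\beta(t)(1+r(t))(\hat\Theta_t-Q_t)\,dt$ with $\hat\Theta_0-Q_{0-}=0$, hence $\hat\Theta_t=Q_t$; plugging $\hat\Theta_t=Q_t$ into the filter SDE for $\hat v_t$ turns it into \eqref{p_dynamics}, so $\hat v_t=P_t$ (both start at $0$). This gives $Q_t=\bE[\ta-\theta_t\mid\FF_t^M]$ and $P_t=\bE[\tv\mid\FF_t^M]$, i.e.\ \eqref{q_rational} and \eqref{p_rational}, and then $\gamma_{11}(t)=\Sigma_1(t)$, $\gamma_{12}(t)=\Sigma_2(t)$ read, via the tower property and the deterministic (hence unconditional) nature of $\gamma$, as $\bE[(\ta-\theta_t-Q_t)^2]=\Sigma_1(t)$ and $\bE[(\tv-P_t)(\ta-\theta_t-Q_t)]=\Sigma_2(t)$, which is \eqref{Sigmas}.

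I expect the main obstacle to be organisational rather than substantive: one must be careful that the signal $\ta-\theta_t$ carries no Brownian driver of its own (only the $Q$-feedback and the static $\tv$-component), that $Q_t$ — although it appears in the signal dynamics — is an admissible observation-measurable forcing term and not part of the hidden state, and that the filter is only run on the open interval $[0,T)$ where the equilibrium coefficients are bounded. The algebraic matching in the last paragraph relies entirely on the relations already assembled in the heuristic derivation of Section~3, so no genuinely new identity has to be discovered.
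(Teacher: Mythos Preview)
Your proposal is correct and follows essentially the same route as the paper: identify the observation--signal pair $(Y_t,(\ta-\theta_t,\tv))$, note that $Q_t$ is $\FF_t^M$-adapted and hence a legitimate known forcing term, verify that the explicit functions of \eqref{equilibrium_explicit} satisfy the consistency relations \eqref{lm_exp}--\eqref{rs_exp} and that $\Sigma_2$ from \eqref{Sigma2} solves \eqref{S3_ode}, and then invoke the classical Kalman--Bucy theorem. The paper's own proof is a terse two-paragraph citation of Liptser--Shiryaev Theorem~10.3, whereas you spell out the state-space model, the Riccati matching, and the ODE uniqueness argument identifying $\hat\Theta_t$ with $Q_t$ and $\hat v_t$ with $P_t$; this extra detail is sound and the underlying argument is the same.
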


\begin{proof} 
While $\Sigma_1(t)$ is given as the solution of the ODE in \eqref{S1_ode}, we can use $\lambda(t)$ in \eqref{equilibrium_explicit} to see that $\Sigma_2(t)$ defined in \eqref{Sigma2} satisfies the ODE in  \eqref{S3_ode}.

Based on \eqref{theta_dynamics}, the market makers' observation process $Y_t$ in \eqref{Y} has dynamics 
\begin{align}
dY_t =
\begin{cases}
& \sigma_wdW_t + \Big(\beta(t) (\ta -\theta_t-Q_t) + \alpha(t) Q_t\Big) dt,\quad t\in(0,T),\\
& \ta - \theta_{T-},\quad t=T.
\end{cases}
\end{align}
Because the explicit expressions in \eqref{equilibrium_explicit} satisfy  \eqref{lm_exp}-\eqref{rs_exp}, the Kalman-Bucy result (see, e.g., Theorem 10.3 in Liptser and Shiryaev 2001) gives \eqref{p_rational} and \eqref{q_rational} for $t\in [0,T)$. Furthermore, because $\Sigma_1$ satisfies \eqref{S1_ode} and $\Sigma_2$ satisfies \eqref{S3_ode}, the Kalman-Bucy result also gives \eqref{Sigmas}.
\end{proof}

\section{Full information}\label{app:full}
This appendix briefly explains why the equilibrium produced by Theorem \ref{main_thm} continues to be an equilibrium when \eqref{FI_t} is replaced with $\FF_t^I:=\sigma(\tv, \ta, (W_u)_{u\in [0,t]})$. Because $\FF_0^I=\sigma(\tv, \ta)$, the right-hand-side of the insider's maximization problem \eqref{rebalancer_problem} is altered and becomes 
\begin{align}\label{full_insider}
 \sup_{\theta \in \mathcal{A}} \mathbb{E} \left[ (\tilde{v}-P_T)\theta_T + \int_{[0,T]} \theta_{t-} dP_t \middle| \mathcal{F}^I_0\right] = \tilde{v}\tilde{a}-\inf_{\theta \in \mathcal{A}} \mathbb{E} \left[ \int_{[0,T]} (\tilde{a}-\theta_{t-})dP_t\middle| \mathcal{F}^I_0 \right],
 \end{align}
where the admissible set $\cA$ is as in Definition \ref{ad_def}. While a term like $\gamma(t)(\tv - P_t)$ for a deterministic function $\gamma(t)$ is crucial in Kyle (1985) and Back (1992), such a term is irrelevant in our constrained setting because the infimum in \eqref{full_insider} does not depend on $\tv$. Consequently,  the equilibrium in Theorem \ref{main_thm} remains a valid equilibrium even when the insider initially observes both $\ta$ and $\tv$.
\ \\


\begin{thebibliography}{}

\bibitem{Almg2003} Almgren, R. (2003): \emph{Optimal execution with nonlinear impact functions and trading-enhanced risk}, Applied Mathematical Finance {\bf 10}, 1--18.

\bibitem{}  Almgren, R., and N. Chriss (1999): \emph{Value under liquidation}, Risk {\bf 12}, 61--63. 

\bibitem{}  Almgren, R., and N. Chriss (2000): \emph{Optimal execution of portfolio transactions}, Journal of Risk {\bf 3}, 5--39.

\bibitem{Back1992} Back, K. (1992): \emph{Insider trading in continuous time}, Review of Financial Studies, {\bf 5}, 387--409.


\bibitem{} Back, K., and S. Baruch (2004): \emph{Information in securities markets: Kyle meets Glosten and Milgrom}, Econometrica {\bf 72}, 433--465.



\bibitem{} Back, K., P. Collin-Dufresne, V. Fos, T. Li, and A. Ljungqvist (2018): \emph{Activism, Strategic trading, and liquidity}, Econometrica {\bf 86}, 1431--1463.


\bibitem{}  Barardehi, Y. H., and D. Bernhardt  (2021): \emph{Uncovering the impacts of endogenous liquidity consumption in intraday trading patterns}, working paper.


\bibitem{} Barger, W., and R. Donnelly (2020): \emph{Insider trading with temporary price impact},  International Journal of Theoretical and Applied Finance {\bf  24}, 1--32. 
 
\bibitem{} Baruch, S. (2002): \emph{Insider trading and risk aversion}, Journal of Financial Markets {\bf5}, 451--464. 


\bibitem{BC1998} Basak, S., and D. Cuoco (1998): \emph{An equilibrium model with restricted stock market participation}, Review of Financial Studies {\bf 11}, 309-341.

\bibitem{BFHM2018} Bouchard, B., M. Fukasawa, M. Herdegen, and J. Muhle-Karbe (2018): \emph{Equilibrium returns with transaction costs}, Finance and Stochastics \textbf{22}, 569--601.

\bibitem{BP2005} Brunnermeier, M. K., and L. H. Pedersen (2005): \emph{Predatory trading}, Journal of Finance \textbf{60}, 1825--1863.


\bibitem{}
Caldentey, R., and E. Stacchetti (2010):  \emph{Insider trading with a random deadline}, Econometrica {\bf 78}, 245--283.


\bibitem{CLV2007} Carlin, B., M. Lobo, and S. Viswanathan (2007): Episodic liquidity crises: Cooperative and predatory trading. Journal of Finance, 62, 2235–2274.


\bibitem{} \c{C}etin, U. (2018): \emph{Financial equilibrium with asymmetric information and random horizon}, Finance and Stochastics {\bf 22}, 97--126.


\bibitem{}  \c{C}et\.in, U., and A. Danilova (2016): \emph{Markovian Nash equilibrium in financial markets with asymmetric information and related forward–backward systems}. Annals of Applied Probability {\bf 26},  1996--2029.

\bibitem{}  \c{C}et\.in, U., and A. Danilova (2021): \emph{On pricing rules and optimal strategies in general Kyle-Back models}, SIAM Journal of Control and Optimization {\bf  59}, 3973--3998.




\bibitem{} Cho, K.-H. (2003): \emph{Continuous auctions and insider trading: uniqueness and risk aversion}, 
Finance and Stochastics {\bf 7},  47--71.


\bibitem{} Choi, J. H.,  K. Larsen, and D. Seppi (2019): \emph{Information and trading targets in a dynamic market equilibrium}, Journal of Financial Economics {\bf 132}, 22--49. 


\bibitem{} Choi, J. H.,  K. Larsen, and D. Seppi (2021): \emph{Equilibrium effects of intraday order-splitting benchmarks}, Mathematics and Financial Economics {\bf 15}, 315--352. 


\bibitem{} Collin-Dufresne, P. and F. Vyacheslav (2016): \emph{Insider trading, stochastic liquidity, and equilibrium prices}, Econometrica {\bf 84}, 1441--1475.


\bibitem{} Cuoco, D., and H. He (1994): \emph{Dynamic equilibrium in infinite-dimensional economies with incomplete financial markets}, working paper.

\bibitem{} Degryse, H., F. de Jong, and V. van Kervel (2014): \emph{Does order splitting signal uninformed order flow?}, working paper.


\bibitem{GP2016}  G\^arleanu, N., and L. H. Pedersen (2016): \emph{Dynamic portfolio choice with frictions}, Journal of Economic Theory \textbf{165}, 487--516.

\bibitem{Hug2012} Hugonnier, J. (2012): \emph{Rational asset pricing bubbles and portfolio 
constraints}. Journal of Economic Theory {\bf 147}, 2260-2302. 

\bibitem{} van Kervel, V., A. Kwan, and P. Westerholm (2020): \emph{Order splitting and interacting with a counterparty}, working paper.

\bibitem{} Kyle, A. (1985): \emph{Continuous auctions and insider trading}, Econometrica {\bf 53}, 1315--1336.

\bibitem{} Lipster, R. S., and A. N. Shiryaev (2001): \emph{Statistics of random processes I}, Springer. 

\bibitem{} Prieto, R. (2010): \emph{Dynamic equilibrium with heterogeneous agents and risk constraints}, working paper.

\bibitem{} Protter, P. (2005): \emph{Stochastic integration and differential equations}, Springer (2.1 Ed.).

\bibitem{}  Schied, A., and T. Sch\"oneborn (2009): \emph{Risk aversion and the dynamics of optimal liquidation strategies in illiquid markets}, Finance and Stochastics {\bf13}, 181--204.


\end{thebibliography}
\end{document}